\def\BibTeX{{\rm B\kern-.05em{\sc i\kern-.025em b}\kern-.08em
    T\kern-.1667em\lower.7ex\hbox{E}\kern-.125emX}}
\renewcommand{\cite}{\citep}
\newtheorem{theorem}{Theorem}
\newtheorem{lemma}{Lemma}
\newtheorem{remark}{Remark}
\newtheorem{proposition}{Proposition}
\newcommand{\ignore}[1]{}
\newcommand{\real}{\mathbb{R}}
\newcommand{\SUt}{\mathtt{SU}}
\newcommand{\SPt}{\mathtt{SP}}
\newcommand{\IPt}{\mathtt{IP}}
\newcommand{\IUt}{\mathtt{IU}}
\newcommand{\RPt}{\mathtt{RP}}
\newcommand{\RUt}{\mathtt{RU}}
\newcommand{\betau}{\beta_\mathtt{U}}
\newcommand{\betap}{\beta_\mathtt{P}}
\newcommand{\cp}{c_\mathtt{P}}
\newcommand{\hatbetau}{\hat{\beta}_\mathtt{U}}
\newcommand{\hatbetap}{\hat{\beta}_\mathtt{P}}
\newcommand{\subscr}[2]{#1_{\textup{#2}}}
\newcommand{\oprocendsymbol}{\hbox{$\bullet$}}
\newcommand{\oprocend}{\relax\ifmmode\else\unskip\hfill\fi\oprocendsymbol}
\def \mr{\mathrm}
\begin{document}

\title{Coupled Evolutionary Behavioral and Disease Dynamics under Reinfection Risk}

\author{Abhisek Satapathi, Narendra Kumar Dhar, Ashish R. Hota~and~Vaibhav Srivastava\thanks{Abhisek Satapathi and Ashish R. Hota are with the Department of Electrical Engineering, IIT Kharagpur, India. Narendra Kumar Dhar is with the School of Computing and Electrical Engineering, IIT Mandi, India. Vaibhav Srivastava is with the Department of Electrical and Computer Engineering, Michigan State University, East Lansing, MI 48824. (Email: abhisek.ee@iitkgp.ac.in, narendra@iitmandi.ac.in, ahota@ee.iitkgp.ac.in, vaibhav@msu.edu). This work was supported in part by ARO grant W911NF-18-1-0325. A preliminary version of this work \cite{satapathi2022epidemic} appeared at the American Control Conference, 2022.}}%

\date{}
\maketitle

\begin{abstract}
We study the interplay between epidemic dynamics and human decision making for epidemics that involve reinfection risk; in particular, the susceptible-infected-susceptible (SIS) and the susceptible-infected-recovered-infected (SIRI) epidemic models. In the proposed game-theoretic setting, individuals choose whether to adopt protection or not based on the trade-off between the cost of adopting protection and the risk of infection; the latter depends on the current prevalence of the epidemic and the fraction of individuals who adopt protection in the entire population. We define the coupled epidemic-behavioral dynamics by modeling the evolution of individual protection adoption behavior according to the replicator dynamics. For the SIS epidemic, we fully characterize the equilibria and their stability properties. We further analyze the coupled dynamics under timescale separation when individual behavior evolves faster than the epidemic, and characterize the equilibria of the resulting discontinuous hybrid dynamical system for both SIS and SIRI models. Numerical results illustrate how the coupled dynamics exhibits oscillatory behavior and convergence to sliding mode solutions under suitable parameter regimes. 
\end{abstract}

\section{Introduction}

Infectious diseases or epidemics spread through human society via social interactions among infected and healthy individuals. There are two broad classes of epidemic models: (i) epidemics where an individual upon recovery is no longer at risk of future infection (e.g., the susceptible-infected-recovered (SIR) epidemic model) and (ii) epidemics where recovery from infection does not lead to immunity from future infections \cite{mei2017dynamics,nowzari2016analysis}. Motivated by a large number of infectious diseases where recovered individuals are still at risk of future infection, including COVID-19 \cite{crellen2021dynamics}, we focus on the second class of epidemics in this work; in particular, the susceptible-infected-susceptible (SIS) and the susceptible-infected-recovered-infected (SIRI) \cite{pagliara2018bistability} epidemic models. In the SIRI model, the infection rate of individuals who have been infected in the past is different from those who have never been infected which captures diseases that impart compromised or strengthened immunity after initial infection. 

In the absence of suitable medicines or vaccines, particularly at the onset of an epidemic, individuals adopt protective measures (such as wearing masks or maintaining social distancing) in order to avoid becoming infected, often in a strategic and decentralized manner. Thus, adoption of protective behavior both affects and is affected by the level of infection prevalence. Consequently, selfish adoption of protective behavior has been studied in the framework of game theory \cite{chang2020game,huang2021game}. For the SIS epidemic, both static or single-shot games to model vaccination decisions \cite{hota2019game,trajanovski2015decentralized} and dynamic games that model evolution of protective decisions \cite{theodorakopoulos2013selfish,eksin2016disease,huang2019differential,hota2020impacts} have been analyzed in recent past. Specifically, in \cite{theodorakopoulos2013selfish}, individuals in a well-mixed population decide whether or not to adopt protection (which eliminates the risk of infection) at a certain cost, and epidemic evolution under optimal protection strategies is studied. This setting is generalized in \cite{eksin2016disease} and \cite{hota2020impacts} to include network interactions. The COVID-19 pandemic led to renewed interest in this topic, with several recent papers exploring game-theoretic protection and vaccination strategies for the class of SIR epidemic and its variants \cite{amini2022epidemic,kordonis2022dynamic,altman2022mask}. While these works characterize Nash equilibrium strategies under various assumptions on the payoffs and population heterogeneity, they do not explore (i) (evolutionary) learning dynamics for the decision-makers and (ii) equilibria (and their stability) of the epidemic dynamics under game-theoretic strategies.

In particular, the problem of a large group of strategic individuals finding or learning equilibrium strategies is quite challenging \cite{daskalakis2009complexity,sandholm2010population}. In settings with a large population of agents, {\it evolutionary dynamics} have been proposed and their convergence behavior to the set of equilibria have been analyzed \cite{hofbauer2003evolutionary,sandholm2010population}. This class of learning dynamics involves repeated play of a static game where the payoff functions remain unchanged. In contrast with the above settings, epidemic games exhibit a dynamically changing proportion of healthy and infected individuals. A recent work \cite{elokda2021dynamic} investigated this dynamic evolution of infection dynamics and strategic decisions for the class of susceptible-asymptomatic-infected-recovered (SAIR) epidemic in the framework of dynamic population games. Coupled evolution of disease and behavior have also been investigated in \cite{martins2022epidemic,khazaei2021disease} for SIRS and SEIR epidemic models, respectively. While the setting in \cite{martins2022epidemic} allows for reinfection, the payoff function of an individual does not directly depend on the infection prevalence, rather the payoffs are governed by a dynamic reward term likely designed by a central authority. The authors in \cite{liu2022herd} consider a coupled SIS epidemic and evolutionary learning model based on mean dynamic and analyze equilibria when disease dynamics is faster than learning dynamics. In a contemporary work \cite{frieswijk2022mean}, the authors consider a coupled SIS epidemic-behavioral setting in which the payoff contains a social influence factor and imitation dynamics is used to model the evolution of behavior. 

In this paper, we build upon the above line of work, and consider the SIS and SIRI epidemic\footnote{To the best of our knowledge, game-theoretic analysis of protection strategies for the SIRI epidemic model has yet not been reported.} settings where a large population of individuals choose whether to adopt protection or to remain unprotected as the epidemic evolves. Adopting protection reduces the infection probability for healthy individuals and transmission probability for infected individuals.\footnote{Our formulation, with partially effective protection, generalizes the setting in \cite{theodorakopoulos2013selfish} where adopting protection completely eliminated the infection risk.} The cost of adopting protection is weighed with the instantaneous risk of becoming infected; the latter depends on the current epidemic prevalence and the proportions of individuals in different epidemic states adopting protection. We focus on the {\it replicator dynamics} \cite{hofbauer2003evolutionary,sandholm2010population} to model the evolution of protection decisions and study their interaction with SIS and SIRI dynamics. Our choice is motivated by the fact that the replicator dynamics is one of the most well studied evolutionary learning dynamics and has seen widespread applications in biological, environmental and socio-economic settings \cite{cressman2014replicator,sun2021comparison,weitz2016oscillating,wang2017testability,kurita2022covid}. The major contributions of this work are stated below. 

First, for the coupled SIS epidemic and replicator dynamics, we completely characterize the equilibria (existence and local stability) and show how the stability of different equilibrium points get exchanged as certain parameters change in Section~\ref{section:model}. We further explore the behavior of the coupled dynamics under timescale separation leading to a slow-fast dynamical system \cite{berglund2006noise}. Second, for the coupled SIRI epidemic and replicator dynamics, we analyze possible equilibria under timescale separation in Section~\ref{section:SIRI}. For both (SIS and SIRI) coupled dynamics, we specifically focus on the case where the replicator dynamics are faster.\footnote{This is complementary to the setting in \cite{liu2022herd} which studied the case where disease dynamics is faster than learning dynamics.} We characterize the asymptotic convergence of the infected proportion under game-theoretic strategies to an equilibrium by analyzing the resulting slow dynamics which takes the form of a variable structure system in the limiting regime. We also show that the infected proportion may converge to an endemic sliding mode of the hybrid dynamics for a certain range of parameters. Finally, we provide insights into the transient behavior of these coupled dynamics using numerical simulations in Section~\ref{section:numerical}. Particularly, we illustrate the impact of the timescale separation parameter and oscillatory convergence to the equilibrium point. 
\section{Coupled SIS Epidemic and Evolutionary Behavioral Model}
\label{section:model}

In this section, we formally introduce the coupled evolution of the SIS epidemic and protection adoption  behavior in a homogeneous large-population setting. Let the proportion of susceptible and infected individuals be $s(t)$ and $y(t)$, respectively. Both $s(t), y(t) \in [0,1]$ with $s(t) + y(t) = 1$ for all $t \geq 0$. We adopt a {\it population game} framework \cite{sandholm2010population} where individuals choose whether to adopt protection against the epidemic or not; these actions are denoted by $\mathtt{P}$ and $\mathtt{U}$. Consequently, the {\it population state} at time $t$ is defined as $x(t) := [x_{\SUt}(t) \quad x_{\SPt}(t) \quad x_{\IUt}(t) \quad x_{\IPt}(t)]^\intercal \in \Delta_4$, where $\Delta_n$ is the probability simplex in $\real^n$, $x_{\SUt}$ denotes the proportion of individuals who are susceptible and choose to remain unprotected, $x_{\IPt}$ denotes the proportion of infected individuals who adopt protection, and so on. At time $t$, we have $x_{\SUt}(t) + x_{\SPt}(t) = s(t)$, $x_{\IUt}(t) + x_{\IPt}(t) = y(t)$, and $\mathds{1}^\intercal x(t) = 1$ where $\mathds{1}$ is a vector of appropriate dimension with all entries $1$. Thus, $x(t)$ encodes the joint strategies of the entire population of agents.

Individuals choose their action to maximize their payoffs which depend on the population state $x(t)$ (i.e., the joint strategy profile, which also includes information regarding infection prevalence). For an infected individual, there is no further risk of infection, and as a result, we define its payoff to be constant parameters given by $-c_{\IUt}$ if it remains unprotected and $- c_{\IPt}$ if it adopts protection. For instance, $c_{\IUt} > 0$ captures the cost of breaking isolation/quarantine protocols for an infected individual. Thus, we assume $c_{\IUt} > c_{\IPt} \geq 0$.

A susceptible individual trades off the cost of adopting protection, denoted by $\cp > 0$, and expected cost of becoming infected. The latter is the product of the loss upon infection $L > 0$\footnote{The value of $L$ may arise from various considerations, including perception or opinion regarding disease severity prevalent in the society, expected long-run economic loss for the duration of the disease, among others. Further explorations along these lines lies beyond the scope of this paper. In this work, we focus on equilibrium behavior for a given known value of $L$.} and the instantaneous probability of becoming infected which depends on its action and the strategies of other agents (captured in the population state). Specifically, let $\betau$ and $\betap$ denote the probabilities of an infected individual causing a new infection if it is unprotected and protected, respectively. We impose the natural assumption $\betau > \betap \geq 0$ throughout the paper. Consequently, the instantaneous probability of infection for an unprotected susceptible individual at population state $x$ is given by $\betau x_{\IUt} + \betap x_{\IPt}$. Similarly, let a protected susceptible individual be $\alpha \in (0,1)$ times (less) likely to become infected compared to an unprotected susceptible individual. Thus, the instantaneous probability of infection for a protected susceptible individual at population state $x$ is given by $\alpha(\betau x_{\IUt} + \betap x_{\IPt})$. The payoff vector of individuals at population state $x$ is now defined as
\begin{equation}\label{eq:payoff_sis}
F(x) \!:= 
\begin{bmatrix}
F_{\SUt}(x) \\
F_{\SPt}(x) \\
F_{\IUt}(x) \\
F_{\IPt}(x)
\end{bmatrix}
\!=\!
\begin{bmatrix}
-\! L (\betau x_{\IUt} + \betap x_{\IPt}) \\
-\cp -\! L \alpha (\betau x_{\IUt} + \betap x_{\IPt}) \\
- c_{\IUt} \\
- c_{\IPt}
\end{bmatrix}
,
\end{equation}
where $F_{\SUt}(x)$ denotes the payoff for an individual who is susceptible and unprotected at population state $x$ (and thus depends on the joint strategy profile), and so on. Susceptible individuals who adopt protection pay a cost $\cp$ but experience a reduced infection risk scaled by factor $\alpha$ as discussed above. 

While most of the past works have proceeded to directly analyze the equilibrium strategies after introducing the payoff function, we here consider the evolutionary learning dynamics under which individual protection decisions evolve. We first introduce some notation. Let $z_S(t) \in [0,1]$ denote the fraction of susceptible individuals who remain unprotected, i.e., $x_{\SUt}(t) = z_S(t) s(t)$ and $x_{\SPt}(t) = (1-z_S(t)) s(t)$. Similarly, let $z_I(t) \in [0,1]$ denote the fraction of infected individuals who remain unprotected. Due to the presence of both unprotected and protected individuals with different infection probabilities, the infected proportion evolves as
\begin{align}
\dot{y}(t) & = (x_{\SUt}(t) + \alpha x_{\SPt}(t)) (\betau x_{\IUt}(t) + \betap x_{\IPt}(t)) - \gamma y(t) \nonumber
\\ & = \big[(1-y(t)) (z_S(t) + \alpha (1-z_S(t))) (\betau z_I(t) + \betap (1-z_I(t))) - \gamma \big] y(t) \nonumber
\\ & =: f_y(y(t),z_S(t),z_I(t)), \label{eq:sis_scalar_com}
\end{align}
where $\gamma$ is the rate of recovery for infected individuals. The above dynamics is analogous to the conventional scalar SIS epidemic dynamics with effective infection rate $\subscr{\beta}{eff}(z_S,z_I) = (z_S + \alpha (1-z_S)) (\betau z_I + \betap (1-z_I))$ which now depends on the efficacy of protection and the fractions that adopt protection. If $\alpha = 1$ and $\betap = \betau = \beta$, i.e., protection is not effective, the effective infection rate is $\beta$ which is the setting in classical SIS epidemic without protection. Note further that in contrast with the classical SIS epidemic setting, $\subscr{\beta}{eff}(z_S,z_I)$ is time-varying as the fractions of susceptible and infected individuals adopting protection evolves with time in accordance with evolutionary learning dynamics as discussed below.

We focus on the class of replicator dynamics~\cite{sandholm2010population,cressman2014replicator} in this work and assume that susceptible individuals only replicate the strategies of other susceptible individuals (likewise for infected individuals). For susceptible individuals, we obtain
\begin{align}
\dot{z}_{S}(t) & =  {z}_{S}(t)(1-{z}_{S}(t)) \big[ F_{\SUt}(x) - F_{\SPt}(x) \big] \nonumber
\\ & = {z}_{S}(t)(1-{z}_{S}(t))\big[ \cp - L(1-\alpha)(\betau z_{I}(t) + \betap (1-z_I(t))) y(t) \big] \nonumber
\\ & =: f_S(y(t),z_S(t),z_I(t)). \label{eq:main_zs}
\end{align}
Similarly, for infected individuals, we have
\begin{align}
\dot{z}_{I}(t) 
 & = {z}_{I}(t) (1-{z}_{I}(t)) (c_{\IPt}-c_{\IUt}) =: f_I(y(t),z_S(t),z_I(t)). \label{eq:main_zi}
\end{align}

Thus, equations \eqref{eq:sis_scalar_com}, \eqref{eq:main_zs} and \eqref{eq:main_zi} characterize the coupled evolution of the epidemic and population states which remain confined to the set $[0,1]^3$ as shown below.

\begin{lemma}[Invariant set of Coupled SIS Epidemic]
\label{lem:invariant-set}
For the coupled SIS epidemic and evolutionary behavioral dynamics defined by \eqref{eq:sis_scalar_com}, \eqref{eq:main_zs} and \eqref{eq:main_zi}, the set $\{(y, z_S, z_I) \rvert (y, z_S, z_I) \in [0,1]^3\}$ is invariant. 
\end{lemma}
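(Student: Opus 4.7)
The plan is to verify Nagumo's tangency condition on each face of the unit cube $[0,1]^3$, which is a standard sufficient condition for forward invariance of a closed set under a locally Lipschitz vector field. Since the three equations are polynomial in $(y, z_S, z_I)$, they are smooth, so local existence and uniqueness hold and it suffices to show that at every boundary point the vector field does not point strictly outward.

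First I would handle the four faces coming from the behavioral variables. Observe that $f_S$ in \eqref{eq:main_zs} contains the explicit factor $z_S(1-z_S)$, so $f_S$ vanishes on both $\{z_S = 0\}$ and $\{z_S = 1\}$; identically, $f_I$ in \eqref{eq:main_zi} has the factor $z_I(1-z_I)$ and vanishes on $\{z_I = 0\}$ and $\{z_I = 1\}$. Hence these four hyperplanes are themselves invariant, which gives tangency trivially. Alternatively, one can recognize each scalar equation as the standard logistic/replicator form $\dot{z} = z(1-z)\,g(y(t),z_S(t),z_I(t))$, so by direct integration $z(t)$ remains in $[0,1]$ whenever it starts there.

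Next I would treat the two faces coming from $y$. On $\{y = 0\}$, the expression $f_y$ in \eqref{eq:sis_scalar_com} has an overall factor of $y$, so $\dot{y}|_{y=0} = 0$ and the face $\{y=0\}$ is invariant. On $\{y = 1\}$, the factor $(1-y)$ in the production term vanishes, leaving $\dot{y}|_{y=1} = -\gamma < 0$, so the vector field points strictly into the interior of $[0,1]$ in the $y$-coordinate. In particular, $y$ cannot exceed $1$.

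Combining these observations, on every face of $[0,1]^3$ the component of the vector field normal to that face is either zero or points inward, so Nagumo's theorem yields forward invariance of $[0,1]^3$. I do not expect any serious obstacle here: the only item requiring a brief check is the $\{y=1\}$ face, where one must confirm that the sign of $\dot{y}$ is strictly negative irrespective of $z_S$ and $z_I$, which is immediate because $(1-y)$ kills the production term and $-\gamma y$ remains. Intersections of faces (edges and corners) are handled automatically since the relevant components already vanish individually.
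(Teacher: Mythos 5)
Your proposal is correct and follows essentially the same route as the paper: checking that on each face of $[0,1]^3$ the normal component of the vector field vanishes (for $z_S,z_I\in\{0,1\}$ and $y=0$) or points inward ($\dot y=-\gamma<0$ at $y=1$), and then invoking Nagumo's theorem. Your version just spells out the face-by-face verification in more detail than the paper's one-line argument.
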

\begin{proof}
It is easy to see that when $y(t) =0$, $\dot y = 0$ and when $y(t) =1$, $\dot y < 0$. Likewise, for $z_I, z_S \in \{0,1\}$, $\dot z_I = \dot z_S =0$. The result follows from Nagumo's theorem~\cite[Theorem 4.7]{blanchini2008set}.  
\end{proof}

\subsection{Equilibrium Characterization and Stability Analysis}

We now examine the equilibrium points of the above coupled epidemic-replicator dynamics and their stability properties. First, we consider the evolution of $z_I$ in \eqref{eq:main_zi} which does not depend on $y$ and $z_S$. There are two stationary points $z_I = 0$ and $z_I = 1$, and it is easy to see that, for $c_{\IUt} > c_{\IPt}$, $z_I = 1$ is unstable and $z_I = 0$ is exponentially stable with basin of attraction $[0,1)$. It is also quite intuitive that if $c_{\IPt} < c_{\IUt}$, infected individuals prefer to use protection and the strategy for infected individuals should converge to it. 

Thus, in the remainder of this section, we only focus on equilibria with $z_I = 0$. We begin with introducing a few variables that will be used to define the equilibrium points:
\begin{multline*}
y^*_{\mathtt{U}} : = 1 - \frac{\gamma}{\betap}, \quad
y^*_{\mathtt{P}} := 1 - \frac{\gamma}{\alpha \betap}, \quad 
y^*_{\mr{int}}:= \frac{\cp}{L(1-\alpha)\betap} \text{and }  z^*_{S,\mr{int}} := \frac{1}{1-\alpha} \left[ \frac{\gamma}{\betap(1-y^*_{\mr{int}})} - \alpha \right].
\end{multline*}
	
	We now define all possible equilibria $(y^*,z^*_S,z^*_I)$ of the coupled SIS epidemic and evolutionary behavior dynamics  (\ref{eq:sis_scalar_com}--\ref{eq:main_zi}) corresponding to $z_I=0$:
	\begin{multline*}
		\mathbf{E0}= (0,0,0), \quad 	\mathbf{E1}= (0,1,0),  \quad  \mathbf{E2}= (y^*_{\mathtt{U}},1,0), 
		\mathbf{E3}= (y^*_{\mr{int}},z^*_{S,\mr{int}},0),  \text{ and}  \quad  \mathbf{E4}= (y^*_{\mathtt{P}},0,0). 
	\end{multline*}
	
At $\mathbf{E0}$ everyone adopts protection and there is no infection. At $\mathbf{E1}$, there is no infection, and susceptible individuals do not adopt protection. $ \mathbf{E2}$ is an endemic equilibrium, i.e., a fraction of the population is infected, and susceptible individuals continue to remain unprotected. $\mathbf{E3}$ is an endemic equilibrium where a fraction of susceptible individuals adopt protection. Finally, $\mathbf{E4}$ is an endemic equilibrium  at which all  susceptible individuals adopt protection. The existence and local stability of these equilibria, as summarized in Table \ref{tab:eq_summary}, are established in the following proposition whose proof is presented in Appendix \ref{app:sis:1}. 

\begin{table*}[t]
\centering
	{\renewcommand{\arraystretch}{1.5}
\caption{Existence and stability of equilibria of the coupled dynamics \eqref{eq:sis_scalar_com}, \eqref{eq:main_zs} and \eqref{eq:main_zi}.}
\label{tab:eq_summary}
		\begin{tabular}{|c |c |c |c |c |c |}
			\hline
			\multirow{2}{*}{Epidemic} & \multirow{2}{*}{Endemic} & \multicolumn{4}{c|}{Equilibria} \\ \cline{3-6} 
			
			Parameters & Infection Level & $\mathbf{E1}: (0,1,0)$ & $\mathbf{E2}: (y^*_u,1,0)$ & $\mathbf{E3}: (y^*_{\mr{int}},z^*_{S,int},0)$ & $\mathbf{E4}: (y^*_p,0,0)$ \\ \hline \hline
			
			$\gamma > \beta_p$ & $-$ & \cellcolor{green!45} $\checkmark$, stable & $-$ & $-$ & $-$ \\ \hline
			
			\multirow{2}{*}{$\alpha \beta_p < \gamma < \beta_p$} & $y^*_u < y^*_{\mr{int}}$ & $\checkmark$, unstable & \cellcolor{green!45} $\checkmark$, stable  & $-$  & $-$ \\ \cline{2-6} 
			
			& $y^*_{\mr{int}} < y^*_u$ & $\checkmark$, unstable & $\checkmark$, unstable & \cellcolor{green!45} $\checkmark$, stable & $-$ \\ \hline
			
			\multirow{3}{*}{$\gamma < \alpha \beta_p$} & $y^*_{u} < y^*_{\mr{int}}$ & $\checkmark$, unstable & \cellcolor{green!45} $\checkmark$, stable &  $-$ & $\checkmark$, unstable \\ \cline{2-6} 
			
			& $y^*_p < y^*_{\mr{int}} < y^*_u$ & $\checkmark$, unstable & $\checkmark$, unstable  & \cellcolor{green!45} $\checkmark$, stable & $\checkmark$, unstable \\ \cline{2-6}
			
			& $y^*_{\mr{int}} < y^*_p$ & $\checkmark$, unstable & $\checkmark$, unstable & $-$ & \cellcolor{green!45} $\checkmark$, stable \\ \hline
			
	\end{tabular}}
\end{table*}

\medskip

\begin{proposition}[Equilibria and Stability]\label{prop:equilibria-and-stability}
For the equilibrium points  of the coupled SIS epidemic and evolutionary behavioral dynamics  (\ref{eq:sis_scalar_com}--\ref{eq:main_zi}) corresponding to $z_I=0$, the following statements hold: 
\begin{enumerate}
\item $\mathbf{E0}$ exists for all parameter regimes,  and is unstable; 
\item $\mathbf{E1}$ exists for all parameter regimes, is locally stable if $\betap < \gamma$, and is unstable, otherwise; 
\item $ \mathbf{E2}$ exists only when $\beta_p > \gamma$, is locally stable when $y^*_{\mathtt{U}} < y^*_{\mr{int}}$, and is unstable otherwise; 
\item $\mathbf{E3}$ exists only when $\gamma < \betap$ and $ y^*_{\mathtt{P}}< y^*_{\mr{int}} <  y^*_{\mathtt{U}}$, and is locally stable; and
\item  $\mathbf{E4}$ exists only when $\gamma < \alpha \betap$, is locally stable when $y^*_{\mathtt{P}} > y^*_{\mr{int}}$, and is unstable otherwise. 
\end{enumerate}
\end{proposition}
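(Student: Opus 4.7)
The plan is to first decouple the $z_I$ direction, reducing the 3D stability analysis to a 2D one on the $(y,z_S)$ subsystem with $z_I=0$; then, for the four boundary equilibria $\mathbf{E0}, \mathbf{E1}, \mathbf{E2}, \mathbf{E4}$ where $z_S\in\{0,1\}$, to exploit that the reduced Jacobian is upper triangular and read off eigenvalues directly; and finally to handle the interior equilibrium $\mathbf{E3}$ by a trace-determinant argument.

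First I would observe that $f_I$ depends only on $z_I$, so at any point with $z_I=0$ the full $3\times 3$ Jacobian is block upper triangular with a scalar block $\partial f_I/\partial z_I|_{z_I=0} = c_{\IPt}-c_{\IUt}<0$ and a $2\times 2$ block equal to the Jacobian of the reduced planar system
\begin{align*}
\dot y &= \bigl[(1-y)(\alpha+(1-\alpha)z_S)\beta_p - \gamma\bigr]\,y =: g_y(y,z_S),\\
\dot z_S &= z_S(1-z_S)\bigl[c_P - L(1-\alpha)\beta_p\,y\bigr] =: g_S(y,z_S).
\end{align*}
Stability of any listed equilibrium thereby reduces to stability of its $(y,z_S)$ projection under this planar system. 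In particular, $\mathbf{E0}$ always lies in the invariant set, and the planar Jacobian at $(0,0)$ is diagonal with entries $\alpha\beta_p-\gamma$ and $c_P>0$, so $\mathbf{E0}$ is unstable for every parameter choice.

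Next, for $\mathbf{E1},\mathbf{E2},\mathbf{E4}$, the fact that $z_S\in\{0,1\}$ makes the factor $z_S(1-z_S)$ in $\partial g_S/\partial y$ vanish, leaving an upper-triangular reduced Jacobian with diagonal entries $(1-2y)(\alpha+(1-\alpha)z_S)\beta_p-\gamma$ and $(1-2z_S)[c_P-L(1-\alpha)\beta_p y]$. Substituting the equilibrium conditions $(1-y^*_u)\beta_p=\gamma$ at $\mathbf{E2}$ and $(1-y^*_p)\alpha\beta_p=\gamma$ at $\mathbf{E4}$ would collapse the first diagonal entry to $\gamma-\beta_p$ and $\gamma-\alpha\beta_p$ respectively; existence $y^*_u\in(0,1)$ (resp.\ $y^*_p\in(0,1)$) amounts to $\beta_p>\gamma$ (resp.\ $\alpha\beta_p>\gamma$), which automatically makes that diagonal entry negative. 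Negativity of the second diagonal then rearranges to $y^*_u<y^*_{\mr{int}}$ for $\mathbf{E2}$ and to $y^*_p>y^*_{\mr{int}}$ for $\mathbf{E4}$. For $\mathbf{E1}=(0,1,0)$, the eigenvalues read off directly as $\beta_p-\gamma$ and $-c_P$, yielding stability iff $\beta_p<\gamma$.

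For $\mathbf{E3}$, I would first translate $z^*_{S,\mr{int}}\in(0,1)$ using its explicit formula: $z^*_{S,\mr{int}}>0$ rearranges to $1-y^*_{\mr{int}}<\gamma/(\alpha\beta_p)$, i.e., $y^*_{\mr{int}}>y^*_p$, while $z^*_{S,\mr{int}}<1$ rearranges to $y^*_{\mr{int}}<y^*_u$. At $\mathbf{E3}$, the equilibrium condition $c_P=L(1-\alpha)\beta_p y^*_{\mr{int}}$ causes $\partial g_S/\partial z_S$ to vanish, and using $(\alpha+(1-\alpha)z^*_{S,\mr{int}})\beta_p=\gamma/(1-y^*_{\mr{int}})$ simplifies $\partial g_y/\partial y$ to $-\gamma y^*_{\mr{int}}/(1-y^*_{\mr{int}})<0$. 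The determinant then evaluates to $-\partial g_y/\partial z_S\cdot\partial g_S/\partial y = z^*_{S,\mr{int}}(1-z^*_{S,\mr{int}})L(1-\alpha)^2\beta_p^2\,y^*_{\mr{int}}(1-y^*_{\mr{int}})>0$. Negative trace together with positive determinant yield two eigenvalues with negative real part, so $\mathbf{E3}$ is locally asymptotically stable whenever it exists.

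The main obstacle I anticipate is the $\mathbf{E3}$ analysis: once $\partial g_S/\partial z_S$ is seen to vanish, stability must be extracted from the sign of the off-diagonal product via the determinant rather than from the diagonal itself, and one must verify that the remaining entries $\partial g_y/\partial z_S$ and $\partial g_S/\partial y$ produce a definite sign on the invariant set. Related to this is the bookkeeping required to translate $z^*_{S,\mr{int}}\in(0,1)$ into the paired inequality $y^*_p<y^*_{\mr{int}}<y^*_u$. The other four equilibria should be essentially routine once the triangular structure of the Jacobian is recognized and the defining equilibrium conditions are substituted.
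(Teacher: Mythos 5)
Your proposal is correct and follows essentially the same route as the paper's proof: exploiting the decoupled $z_I$ direction and the (block) triangular Jacobian to read off eigenvalues at $\mathbf{E0}$, $\mathbf{E1}$, $\mathbf{E2}$, $\mathbf{E4}$, and using the negative-trace/positive-determinant argument for the interior equilibrium $\mathbf{E3}$. The existence conditions and the sign computations (including the collapse of the first diagonal entry to $\gamma-\beta_p$ and $\gamma-\alpha\beta_p$, and the vanishing of $\partial g_S/\partial z_S$ at $\mathbf{E3}$) all match the paper's calculations.
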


\begin{remark}
While the above result shows local stability of the equilibrium points, we can show global attractivity if we restrict the coupled dynamics to the $(y,z_S)$ plane, i.e., by setting $z_I=0$ for the planar  dynamics \eqref{eq:sis_scalar_com} and \eqref{eq:main_zs}. Note that all equilibrium points except $\mathbf{E3}$ lie on the boundary of the invariant set $[0,1]^2$ for the planer dynamics. Thus, when $\mathbf{E3}$ does not exist (i.e., in all regimes other than 4) in Proposition~\ref{prop:equilibria-and-stability}, it follows from index theory \cite[Section 6.8]{strogatz2018nonlinear} that no limit cycle exists. Since at any given set of parameter values, there is exactly one equilibrium point which is locally stable, it is globally attractive as well.
\end{remark}

\begin{remark}
As the recovery rate $\gamma$ increases, the infected proportion at the stable equilibrium decreases. In practice, $\gamma$ is improved via direct intervention of authorities by augmenting healthcare facilities. When further resource augmentation is not possible, a more effective protection scheme with a smaller value of $\alpha$ would result in a smaller value of $y^*_{\mathtt{P}}$. In addition, if $\alpha$ is reduced further such that $\gamma > \alpha \betap$, the dynamics exhibits a new stable endemic infection level $y^*_{\mr{int}}$ which also decreases in $\alpha$. Thus, our result shows that indirect intervention by facilitating availability of more effective protection schemes would significantly contribute towards a smaller endemic infection level under game-theoretic strategies.
\end{remark}

\begin{remark}
While the above result holds under the assumption $c_{\IUt} > c_{\IPt}$, the results for the case $c_{\IUt} < c_{\IPt}$ are analogous with $z_I = 1$ being the stable equilibrium for the infected population. In fact, when the proportion of infected agents adopting protection is a constant $z^\star_{I}$, then the results presented above would continue to hold by redefining $\beta_{\mathtt{P}} := \beta_{\mathtt{U}} z^\star_{I} + \beta_{\mathtt{P}} (1-z^\star_{I})$. Similarly, when $\alpha=0$, $y^*_{\mathtt{P}} = -\infty$ and $\mathbf{E4}$ will cease to exist as an equilibrium point. The equilibrium behavior of the coupled dynamics will continue to be governed by the first four cases of the above proposition.
\end{remark}

\subsection{Bifurcation Analysis}\label{sec:bif-analysis}

The above proposition shows that the equilibrium points exchange stability properties as certain parameters, e.g., $\gamma, \betap, \alpha$, vary. We now numerically explore the bifurcations associated with the transition of stability among the equilibria. For the numerical illustration we choose the parameter values in equations \eqref{eq:sis_scalar_com}, \eqref{eq:main_zs} and \eqref{eq:main_zi} as  summarized below. 

\begin{center}
	\begin{tabular}{|c|c|c|c|c|c|c|}
	\hline
	$\cp$ & 	$\alpha$ & $\betau$ &	$c_{\IUt}$ & $L$ &  $\betap$ & $c_{\IPt}$  \\ \hline 
	$1$ & $0.5$ & $0.3$ & $2$ & $80$ & $0.15$ & $1$ \\ \hline 
\end{tabular} 
\end{center}

We adopt the recovery rate $\gamma$ as a bifurcation parameter and use the numerical continuation package MATCONT~\cite{dhooge2003matcont} to compute the bifurcation diagram shown in Fig.~\ref{fig:bifurcation-diagram}.

\begin{figure}[ht!]
	\centering 
\includegraphics[width=0.8\linewidth]{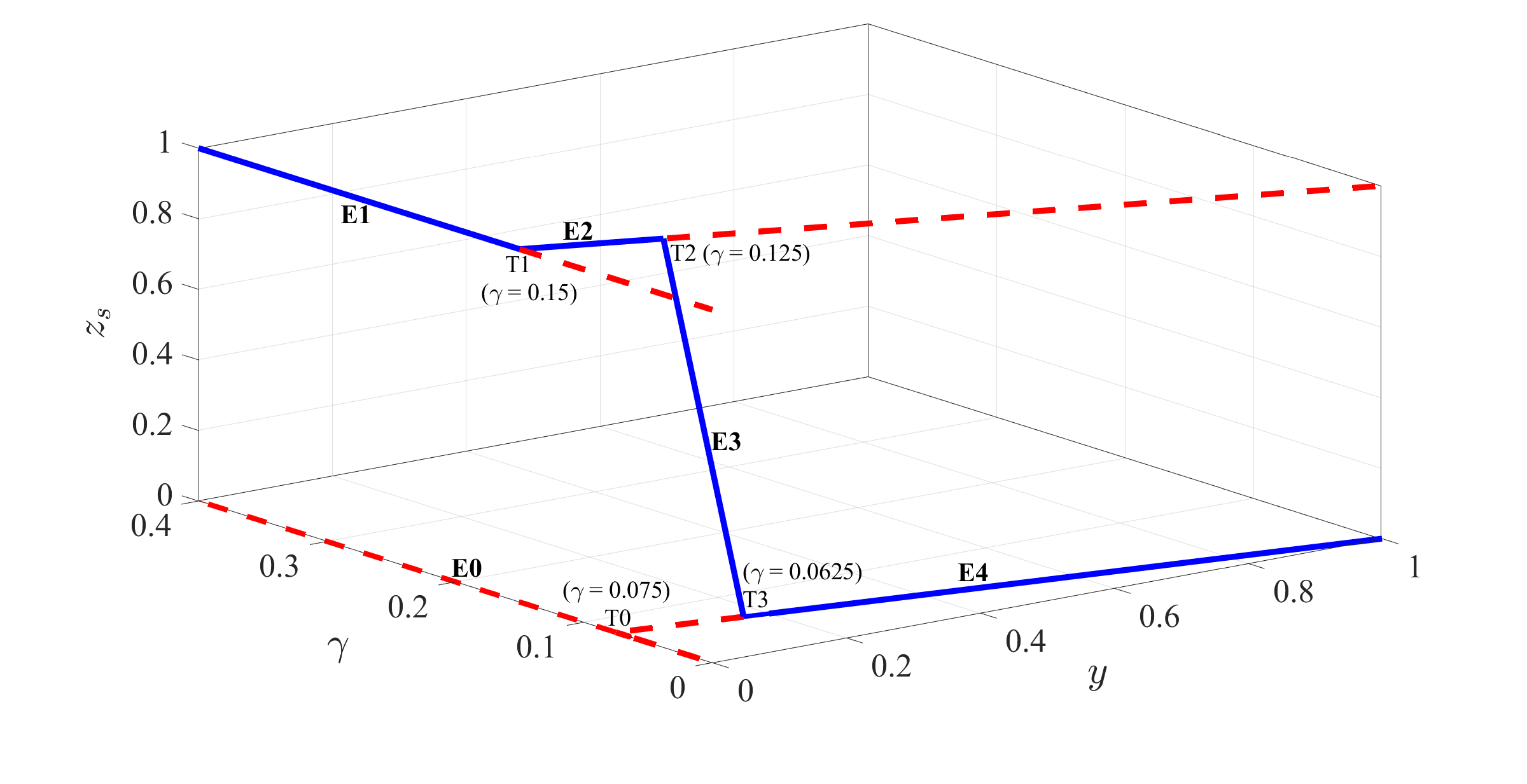} \\
\caption{Bifurcation diagram of the equilibria of the coupled SIS epidemic and evolutionary behavioral dynamics. (solid blue: stable branch and dashed red: unstable branch of equilibria)} \label{fig:bifurcation-diagram}
\end{figure}

For $\gamma \to 0^+$, $\mathbf{E4}$ is the stable equilibrium, while $\mathbf{E0}$, $\mathbf{E1}$, and $\mathbf{E2}$ are unstable. As the value of $\gamma $ is increased at point T3 in  Fig.~\ref{fig:bifurcation-diagram}, $\mathbf{E4}$  exchanges stability to $\mathbf{E3}$ in a transcritical bifurcation. Note that unstable branch of $\mathbf{E3}$ is not visible since it is associated with negative values of $z_S$. As $\gamma$ is increased, the fraction of susceptible population that adopts protection decreases and at T2, $\mathbf{E3}$ exchanges stability to $\mathbf{E2}$ in another transcritical bifurcation. Again, the unstable branch of $\mathbf{E3}$ at T2 corresponds to $z_s>1$ and is not visible in the diagram. Upon further increasing $\gamma$, the fraction of infected population continues to decrease and at  T1, $\mathbf{E2}$ exchanges stability with the disease-free equilibria $\mathbf{E1}$. The unstable branch of $\mathbf{E2}$ at T1 corresponds to negative values of $y$. 

Another transcritical bifurcation takes place at T0 $(\gamma= \alpha \betap)$, where $\mathbf{E0}$ and $\mathbf{E4}$ cross.   For $\gamma < \alpha \betap$ (resp., $\gamma > \alpha \betap$), $\mathbf{E0}$ has two (resp., one) eigenvalues in the right-half plane. As $\mathbf{E4}$ approaches T0 from $y >0$, it has one eigenvalue in the right-half plane, while for $y<0$ near T0, $\mathbf{E4}$ is stable. Thus, the transcritical bifurcation at T0 corresponds to exchange of stable and unstable eigenvalues of two unstable equilibria.

It can be verified that each of the above bifurcations are indeed transcritical. For example, at $\gamma = \betap$, $\mathbf{E1}$ and $\mathbf{E2}$ exchange stability. It follows from the proof of Proposition~\ref{prop:equilibria-and-stability} that at $\gamma = \betap$,  $J_{\mathbf{E1}} = J_{\mathbf{E2}}$ has only one eigenvalue at zero and the associated left and right eigenvalues are $\boldsymbol e_1^\top$ and $\boldsymbol e_1$, where $\boldsymbol e_1 = [1\; 0\; 0]^\top$. Let $\boldsymbol \xi = [y \; z_S \; z_I]$ and $\boldsymbol f (\boldsymbol \xi ) = [f_y (\boldsymbol \xi )\; f_S (\boldsymbol \xi )\;  f_I(\boldsymbol \xi )]^\top$. Then, it can be verified that $\boldsymbol e_1^\top (\partial^2 \boldsymbol f  / \partial \gamma \partial \boldsymbol \xi ) (\boldsymbol e_1) =-1 \ne 0$ at $(\gamma, y, z_S, z_I) = (\betap, 0, 1, 0)$.  Additionally,  $\boldsymbol e_1^\top (\partial^2 \boldsymbol f  / \partial^2 \boldsymbol \xi (\boldsymbol e_1, \boldsymbol e_1) )  =-2 \alpha \betap \ne 0$ at $(\gamma, y, z_S, z_I) = (\betap, 0, 1, 0)$. Thus, the bifurcation at $\gamma = \betap$ is transcritical~\cite[Section 3.4]{JG-PH:90}. The other bifurcations can be analyzed similarly.


\subsection{Coupled Epidemic-Behavioral Dynamics under Timescale Separation}\label{sec:timescale-separation}

We have thus far assumed that the the epidemic and the replicator dynamics evolve at the same time-scale. However, it is not strictly necessary for the coupled dynamics. In order to obtain further insights into their behavior, we now study the coupled epidemic-behavioral dynamics \eqref{eq:sis_scalar_com}, \eqref{eq:main_zs} and \eqref{eq:main_zi} under timescale separation. In particular, we focus on the case in which the replicator dynamics evolves faster than the epidemic dynamics; indeed in the modern era, there is an increased awareness about infectious diseases due to publicly available testing data, awareness campaigns by public health authorities, spread of information via social media, which shapes human response at a much faster time-scale. To this end, we model the coupled dynamics as a slow-fast system:
\begin{align} \label{eq:coupled-dynamics-timescale_slow_epi}
\begin{split}
\dot{y}(t) & =  f_y(y(t),z_S(t),z_I(t)) \\
\epsilon \dot{z}_{S}(t) & =  f_S(y(t),z_S(t),z_I(t)) \\
\epsilon \dot{z}_{I}(t) & = f_I(y(t),z_S(t),z_I(t)), 
\end{split}
\end{align}
where $\epsilon \in (0,1]$ is a timescale separation variable \cite{berglund2006noise}.

At a given epidemic prevalence $y$, we characterize the (stable) equilibria of the fast system involving the replicator dynamics with states $(z_S,z_I)$. For reasons discussed earlier, we focus on equilibria with $z_I = 0$. It is now easy to see that if $y \neq y^*_{\mr{int}}$, there are two equilibrium points: $(0,0)$ and $(1,0)$. If $y = y^*_{\mr{int}}$, then $(z_S,0)$ is an equilibrium point of the fast system for any $z_S \in [0,1]$. Following analogous arguments as in the proof of Proposition \ref{prop:equilibria-and-stability}, it follows that  $(0,0)$ is locally stable for the fast system when $y > y^*_{\mr{int}}$ and $(1,0)$ is locally stable for the fast system when $y < y^*_{\mr{int}}$. Consequently, we obtain the following reduced dynamics for the slow system which approximates the coupled dynamics \eqref{eq:coupled-dynamics-timescale_slow_epi} in the limit $\epsilon \to 0$ as 
\begin{align}
\dot{y}(t) & = 
\begin{cases}
\big[(1-y(t)) \betap - \gamma \big] y(t), \quad & \text{if } y(t) < y^*_{\mr{int}}, 
\\ \big[(1-y(t)) \alpha \betap - \gamma \big] y(t), \quad & \text{if } y(t) > y^*_{\mr{int}},
\end{cases}
\label{eq:epi_slow}
\\ \text{and if  } y(t) = y^*_{\mr{int}}, \text{we have} \qquad  \dot{y}(t) & \!\in\! \{\big[(1\!-\!y^*_{\mr{int}})\betap (z_S \!+\! \alpha(1\!-\! z_S)) \!-\! \gamma \big] y^*_{\mr{int}} \; \rvert \; z_S \! \in [0,1]\}.  \nonumber
\end{align}


In particular, since the reduced dynamics is an instance of a discontinuous dynamical system, we define the dynamics at the point of discontinuity $y = y^*_{\mr{int}}$ as a differential inclusion which is also the convex combination of the dynamics on both sides of $y = y^*_{\mr{int}}$. It is easy to see that the right hand side of the dynamics is measurable and is locally essentially bounded, and therefore \eqref{eq:epi_slow} admits a Filippov solution \cite[Proposition 3]{cortes2008discontinuous}. We now establish convergence of $y(t)$ under \eqref{eq:epi_slow}. 

\begin{proposition}[Trajectories under fast behavioral response]\label{prop:epidemic_slow}
	For the epidemic dynamics \eqref{eq:epi_slow} with $y(0) \neq 0$,  the following statements hold:
	\begin{enumerate}
		\item if $ y^*_{\mathtt{U}} \leq 0$, then $y(t)$ monotonically decreases and converges to the origin; 
		\item if  $0 < y^*_{\mathtt{U}} < y^*_{\mr{int}}$, then $y(t)$ monotonically converges to $y_u^*$; 
		\item if  $ y^*_{\mathtt{P}} < y^*_{\mr{int}} < y^*_{\mathtt{U}}$, then $y(t)$ converges to $y^*_{\mr{int}}$ which acts as a sliding mode of the dynamics;
		\item if  $y^*_{\mr{int}} < y^*_{\mathtt{P}}$, then $y(t)$ monotonically converges to $y^*_{\mathtt{P}}$.
	\end{enumerate}
\end{proposition}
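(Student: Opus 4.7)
The plan is to analyze the discontinuous dynamics \eqref{eq:epi_slow} branch by branch, and then handle the switching surface $y = y^*_{\mr{int}}$ via the Filippov framework. Each branch is a standard scalar logistic SIS equation $\dot y = [\kappa(1-y) - \gamma]y$, with $\kappa = \beta_p$ for $y < y^*_{\mr{int}}$ and $\kappa = \alpha\beta_p$ for $y > y^*_{\mr{int}}$. For either branch the nonzero equilibrium is $1 - \gamma/\kappa$ (namely $y^*_u$ and $y^*_p$, respectively), which is globally attracting on $(0,1]$ when positive, while the origin attracts everything in $[0,1]$ when this equilibrium is nonpositive. I would begin by noting that $[0,1]$ is invariant under the reduced dynamics since $\dot y = 0$ at $y = 0$ and $\dot y = -\gamma < 0$ at $y = 1$ on both branches, and the Filippov inclusion at $y^*_{\mr{int}}$ is a convex combination of the two.

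For each of the four cases the argument reduces to (i) determining the sign of $\dot y$ on each branch, (ii) checking the direction of the two vector fields at the switching surface $y = y^*_{\mr{int}}$, and (iii) concluding either monotone single-branch convergence, crossing of the switching surface, or sliding-mode behavior. In Case~1 ($y^*_u \leq 0$) we have $\kappa \leq \beta_p \leq \gamma$, so $\dot y < 0$ on both branches for $y > 0$; hence $y$ decreases monotonically to $0$. In Case~2 ($0 < y^*_u < y^*_{\mr{int}}$) the lower branch attracts toward $y^*_u$; the inequality $y^*_{\mr{int}} > y^*_u$ gives $\beta_p(1 - y^*_{\mr{int}}) - \gamma < 0$, so both branches point downward at $y^*_{\mr{int}}$, and trajectories starting above cross transversally and then converge on the lower branch to $y^*_u$. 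Case~4 ($y^*_{\mr{int}} \leq y^*_p$) is the symmetric counterpart: the inequality $y^*_{\mr{int}} \leq y^*_p < y^*_u$ gives $\alpha\beta_p(1-y^*_{\mr{int}}) - \gamma \geq 0$ and $\beta_p(1-y^*_{\mr{int}}) - \gamma > 0$, so both branches point upward at $y^*_{\mr{int}}$, and every trajectory converges to $y^*_p$ on the upper branch.

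The critical case is Case~3 ($y^*_p < y^*_{\mr{int}} < y^*_u$). Here the lower branch pushes upward toward $y^*_u$, giving $\beta_p(1-y^*_{\mr{int}}) - \gamma > 0$, while the upper branch pushes downward toward $y^*_p$, giving $\alpha\beta_p(1-y^*_{\mr{int}}) - \gamma < 0$. The two vector fields therefore point toward each other at the discontinuity, which by the standard Filippov sliding-mode construction makes $y = y^*_{\mr{int}}$ an attracting sliding surface: the differential inclusion at $y^*_{\mr{int}}$ contains $\dot y = 0$ as a convex combination of the two branch dynamics (realized by the intermediate value $z_S = z^*_{S,\mr{int}}$ appearing in the definition of $\mathbf{E3}$ in Proposition~\ref{prop:equilibria-and-stability}). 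Finite-time reaching follows from a uniform lower bound on $|\dot y|$ in a neighborhood of $y^*_{\mr{int}}$ on either side, and invariance of the sliding surface follows from the differential inclusion.

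The main obstacle is the careful justification of the sliding mode in Case~3: one must invoke Filippov's framework explicitly to write down the inclusion at the discontinuity, verify strict transversality on both sides, and conclude both finite-time reaching and forward invariance of the sliding surface. The remaining cases reduce to elementary logistic-equation analysis on each branch together with a sign check of the two branch vector fields at $y^*_{\mr{int}}$ to rule out sliding behavior.
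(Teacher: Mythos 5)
Your proposal is correct and follows essentially the same route as the paper's proof: treat each branch as a scalar SIS/logistic equation with effective rate $\beta_p$ or $\alpha\beta_p$, use monotone convergence of each branch toward its own equilibrium, check the sign of both vector fields at the switching surface $y = y^*_{\mr{int}}$ to decide between transversal crossing and sliding, and invoke the Filippov construction (with $0$ in the convex differential inclusion) for Case~3. The only cosmetic differences are that the paper cites the known scalar SIS convergence result rather than rederiving it, and that your explicit finite-time reaching claim in Case~3 is a (correct) refinement the paper does not state.
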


The proof of the above proposition is presented in Appendix \ref{app:sis:2}. 

\medskip

\begin{remark}
The dynamics in \eqref{eq:epi_slow} potentially represents a class of non-pharmaceutical interventions where authorities impose social distancing measures that reduces the infection rate by a factor $\alpha$ when the infection prevalence exceeds a threshold. Thus, the result in Proposition \ref{prop:epidemic_slow} is potentially of independent interest. Further, the above proposition generalizes analogous results obtained in prior works \cite{theodorakopoulos2013selfish,hota2020impacts} which assumed $\alpha = 0$, i.e., adopting protection completely eliminates risk of infection. 
\end{remark}
\section{Coupled SIRI Epidemic and Evolutionary Behavioral Model}
\label{section:SIRI}

In the SIS epidemic model, a recovered individual encounters the same infection rate as an individual who has never been infected. However, in many infectious diseases, initial infection could lead to compromised immunity \cite{timemag} or it might even lead to reduced risk of future infection \cite{gomes2004infection,crellen2021dynamics}. The susceptible-infected-recovered-infected (SIRI) epidemic model~\cite{pagliara2018bistability}, captures the above characteristics. In this section, we investigate the implications of game-theoretic protection decisions on the evolution of the SIRI epidemic dynamics. 

In the SIRI epidemic model, an individual belongs to one of three possible compartments or states: susceptible, infected and recovered; while the proportion of individuals in each of the above states at time $t$ is denoted by $s(t), y(t)$ and $r(t)$, respectively. The evolution of these proportions is given by
\begin{subequations}
\label{eq:siri_vanilla}
\begin{align}
\dot{s}(t) & = -\beta s(t)y(t) 
\\ \dot{y}(t) & = \beta s(t)y(t) + \hat{\beta} r(t)y(t) - \gamma y(t)
\\ \dot{r}(t) & = \gamma y(t) - \hat{\beta} r(t)y(t), 
\end{align}
\end{subequations}
where $\beta > 0$ is the rate at which susceptible individuals become infected if they encounter an infected individual, $\hat{\beta} > 0$ is the rate at which recovered individuals become infected and $\gamma$ is the rate of recovery for infected individuals. The above transitions are depicted in Fig. \ref{fig:siri_tran}. 

Thus, the above model captures settings where an individual develops immunity upon becoming infected ($\beta > \hat{\beta}$) and when the immunity of an individual is compromised upon infection ($\beta < \hat{\beta}$). When $\beta = \hat{\beta}$ and susceptible and recovered states are combined, we recover the SIS epidemic model. The classical SIR model is also obtained as a special case when $\hat{\beta} = 0$. 


\begin{figure}[tb]
\centering
\begin{tikzpicture}[font=\sffamily]

\tikzset{node style/.style={state, minimum width=1cm, line width=0.3mm, fill=yellow!40!white}}
\tikzset{node style1/.style={state, minimum width=1cm, line width=0.3mm, fill=red!50!white}}
\tikzset{node style2/.style={state, minimum width=1cm, line width=0.3mm, fill=green!50!white}}
\node[node style] at (0, 0) (St)     {$\mathtt{S}$};
\node[node style1] at (3, 0) (At)     {$\mathtt{I}$};
\node[node style2] at (6, 0) (Rt)  {$\mathtt{R}$};
\draw[every loop, auto=right,line width=0.4mm,
              >=latex,
              draw=black,
              fill=black]
(St) edge[bend right=0, auto=left] node[above] {$\beta$} (At)
(At) edge[bend right=0] node[above] {$\gamma$} (Rt)
(Rt) edge[bend right=45] node[above] {$\hat{\beta}$} (At);
\end{tikzpicture}
\caption{\footnotesize Evolution of states in the SIRI epidemic model. Self-loops are omitted for better clarity.}
\label{fig:siri_tran}
\end{figure}
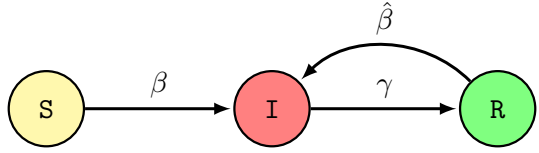

\subsection{Equilibria and Stability without Protective Behavior}

The dynamics in \eqref{eq:siri_vanilla} admits a continuum of equilibria which are infection free (IFE) and may also have an isolated endemic equilibrium (EE) where infection level is nonzero. Mathematically, at the IFE, we have $s = s^*, y = 0, r = 1-s^*$ where $s^* \in [0,1]$. At the EE, we have $s = 0, y = 1-\frac{\gamma}{\hat{\beta}}, r = \frac{\gamma}{\hat{\beta}}$. Before we analyze the implications of strategic adoption of protective behavior, we summarize the main result from \cite{pagliara2018bistability} on the existence and stability of the above equilibria.  

\smallskip

\begin{theorem}[Theorem 2, Lemma 1, Lemma 2 \cite{pagliara2018bistability}]\label{theorem:siri_vanilla}
Let $R_0 := \frac{\beta}{\gamma}$, $R_1 := \frac{\hat{\beta}}{\gamma}$, and $M:=  \frac{1-R_1}{R_0 - R_1}$. Then, the SIRI dynamics exhibits the following behavior. 
\begin{enumerate}
    \item Infection-free: If $R_0<1$ and $R_1<1$, EE is not an equilibrium, while all points in the IFE are locally stable. Further, $y(t)$ decays monotonically to $0$.
    \item Endemic: If $R_0>1$ and $R_1>1$, all points in the IFE are unstable while the EE exists and is locally stable. Further, all solutions reach EE as $t \to \infty$. 
    \item Epidemic: If $R_0>1$ and $R_1\leq1$, then IFE is locally stable if $s^* < M$ and unstable if $s^* > M$. Further, as $t \to \infty$, all solutions reach IFE with $s^* < M$.
    \item Bistable: If $R_0\leq1$ and $R_1>1$, EE exists and is locally stable. IFE with $s^* > M$ is locally stable while IFE with $s^* < M$ is unstable. If $y(0) < 1-M(R_0M)^{-R_0/R_1}$, then the dynamics reach a point at the IFE as $t \to \infty$; otherwise it reaches the EE. 
\end{enumerate}
\end{theorem}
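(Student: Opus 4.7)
The plan is to reduce the SIRI system to a planar one by eliminating $r = 1 - s - y$, classify its equilibria, determine local stability via linearisation, and lift to global statements using a conserved quantity along trajectories. First, setting $\dot{s} = 0$ and $\dot{y} = 0$ on $\{y \ge 0\}$ yields exactly the infection-free continuum $\{(s^*, 0, 1-s^*) : s^* \in [0,1]\}$, together with the isolated endemic equilibrium at $(0, 1-1/R_1, 1/R_1)$ whenever $R_1 > 1$. Linearising the planar dynamics at a generic IFE point produces an upper-triangular Jacobian whose nontrivial eigenvalue equals $\gamma[R_0 s^* + R_1(1-s^*) - 1] = \gamma(R_0 - R_1)(s^* - M)$ (when $R_0 \neq R_1$), which immediately furnishes the ``stable iff $s^* < M$'' dichotomy in Cases 3--4 and the uniform stability (respectively instability) of the entire IFE in Case 1 (respectively Case 2). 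Linearising at the endemic equilibrium yields a lower-triangular Jacobian with eigenvalues $-\beta y^*$ and $-\hat{\beta} y^*$, both negative, so the EE is locally asymptotically stable wherever it exists.

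For global convergence, the crucial step is to derive a conservation law: away from $y = 0$, dividing $\dot{s}$ by $\dot{r}$ gives $ds/dr = -\beta s/(\gamma - \hat{\beta} r)$, which integrates to
\[
H(s,r) := s \cdot \Bigl| \tfrac{\gamma}{\hat{\beta}} - r \Bigr|^{-R_0/R_1} = \text{const}
\]
along every orbit. Combined with the monotonicity $\dot{s} = -\beta s y \le 0$ and a LaSalle-type argument on the planar system, this resolves Cases 1--3: in Case 1 the bracket $\beta s + \hat{\beta} r - \gamma$ is bounded above by $\gamma(\max(R_0,R_1) - 1) < 0$, forcing $y(t) \to 0$ monotonically; in Case 2 every IFE point is unstable while $s(t)$ is monotone decreasing, so $s(t) \to 0$ and the conserved quantity then pins $r(\infty) = 1/R_1$, giving convergence to the EE; in Case 3, the absence of the EE combined with $y(t) \to 0$ forces convergence to the IFE, and the local-stability threshold selects the sub-arc $s^* < M$.

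The real work is Case 4, where both the EE and the stable IFE sub-arc $\{s^* > M\}$ coexist and one must identify the exact separatrix between their basins. The conservation law does the heavy lifting: the separatrix is the unique orbit whose $(s,r)$-curve passes through the bifurcation point $(M, 1-M)$ on the IFE line, and equating $H$ at this point with $H$ evaluated at a generic initial condition $(s(0), y(0), r(0))$ recovers precisely the threshold $y(0) < 1 - M(R_0 M)^{-R_0/R_1}$ stated in the theorem. The main obstacle --- and the most delicate step --- is verifying that this algebraic inequality is actually the basin boundary rather than merely a necessary criterion: one must track the sign of $\dot{y}$ on either side of the separatrix and check orbit orientation near the unstable portion of the IFE to ensure that trajectories starting below the threshold remain there and converge to an IFE point with $s^* > M$ rather than crossing over into the basin of the EE.
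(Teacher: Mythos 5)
This statement is imported verbatim from \cite{pagliara2018bistability} and the paper offers no proof of it, so there is no internal argument to compare against; your attempt can only be measured against the original source, and in outline it reconstructs that source's argument correctly. The planar reduction, the transverse IFE eigenvalue $\gamma\left[R_0 s^* + R_1(1-s^*)-1\right]=\gamma(R_0-R_1)(s^*-M)$, the triangular Jacobian at the EE with eigenvalues $-\beta y^*$ and $-\hat\beta y^*$, and the first integral $\ln s - \tfrac{R_0}{R_1}\ln\lvert\gamma/\hat\beta - r\rvert$ are all correct (I checked: $\tfrac{d}{dt}\ln s=-\beta y$ and $\tfrac{d}{dt}\ln\lvert\gamma/\hat\beta-r\rvert=-\hat\beta y$). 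The identification of the separatrix in Case 4 as the orbit tangent to the IFE line at $(M,1-M)$ is also right, since tangency of the orbit curve to $s+r=1$ occurs exactly where the transverse eigenvalue vanishes, and equating the first integral there with its value at the initial condition does yield $y(0)=1-M(R_0M)^{-R_0/R_1}$.

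Three caveats keep this from being a complete proof. First, every IFE point is non-hyperbolic (one eigenvalue is identically zero along the continuum), so linearization alone does not deliver Lyapunov stability; one needs a normal-hyperbolicity or direct comparison argument transverse to the line of equilibria. Second, your global arguments in Cases 2 and 3 are stated too glibly: monotonicity of $s$ gives $s(t)\to s_\infty\ge 0$, not $s_\infty=0$; the correct step is that $r$ is also monotone on $\{r<\gamma/\hat\beta\}$, so $(s,r)$ and hence $y$ converge, the limit must be an equilibrium, and convergence to an IFE point with positive transverse eigenvalue is ruled out because $\dot y=\left[\beta s+\hat\beta(1-s)-\gamma-\hat\beta y\right]y>0$ for small $y$ near such a point. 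Third, the threshold formula $1-M(R_0M)^{-R_0/R_1}$ comes out of the first-integral computation only under the implicit initial condition $r(0)=0$; for generic $r(0)$ the basin boundary is a different level set of $H$, so your phrase ``a generic initial condition'' overstates what the algebra gives. The Case 4 orientation check you flag as the delicate step is indeed the remaining work, and it is handled in \cite{pagliara2018bistability} by tracking the sign of $g(r)=s(0)(1-R_1 r)^{R_0/R_1}+r-1$ on $[0,\gamma/\hat\beta]$; with that supplied, your outline closes.
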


We now introduce a population game model to capture how individuals adopt protective behavior, followed by analyzing the resulting evolution of the SIRI epidemic. 


\subsection{Population Game Model of Protection Adoption}

We build upon the formulation in the previous section. Consider a large population of individuals, each in one of three possible infection states, who choose whether to adopt protection or remain unprotected. The {\it population state} $x := [x_{\SUt} \quad x_{\SPt} \quad x_{\IUt} \quad x_{\IPt} \quad x_{\RUt} \quad x_{\RPt}]^\intercal \in \Delta_6$, where $x_{\RUt}$ denotes the proportion of recovered individuals who remain unprotected, $x_{\RPt}$ denotes the proportion of recovered individuals who adopt protection, and the other states are as described earlier. At time $t$, we have $x_{\SUt}(t) + x_{\SPt}(t) = s(t)$, $x_{\IUt}(t) + x_{\IPt}(t) = y(t)$, $x_{\RUt}(t) + x_{\RPt}(t) = r(t)$ and $\mathds{1}^\intercal x(t) = 1$. 

A susceptible individual becomes infected at a rate $\betap\ge0$ ($\betau\ge0$) if it comes in contact with an infected individual who adopts protection (remains unprotected). Similarly, a recovered individual becomes reinfected at rate $\hatbetau\ge0$ ($\hatbetap\ge0$) if it comes in contact with an infected and unprotected (protected) individual. A susceptible or recovered individual is $\alpha \in (0,1)$ times less likely to become infected (or reinfected) if it adopts protection compared to an unprotected individual with same disease status. Building upon the discussion in the previous section, we define the payoff vector as
\begin{align}\label{eq:payoff_siri}
    F(x) & \!:=
    \begin{bmatrix}
        F_{\SUt}(x)\\
        F_{\SPt}(x)\\
        F_{\IUt}(x)\\
        F_{\IPt}(x)\\
        F_{\RUt}(x)\\
        F_{\RPt}(x)
    \end{bmatrix}\!=\!
    \begin{bmatrix}
        -L(\betau x_{\IUt} + \betap x_{\IPt})\\
        -\cp \!-\! L\alpha(\betau x_{\IUt} + \betap x_{\IPt})\\
        -c_{\IUt}\\
        -c_{\IPt}\\
        -L(\hatbetau x_{\IUt} + \hatbetap x_{\IPt})\\
        - \cp\!-\!L\alpha(\hatbetau x_{\IUt} + \hatbetap x_{\IPt})
    \end{bmatrix},
\end{align}
where $L, \cp, c_{\IUt}$ and $c_{\IPt}$ are as defined earlier. 

The payoffs for susceptible and infected individuals defined above coincide with the payoffs in case of the SIS epidemic stated in \eqref{eq:payoff_sis}. Due to risk of reinfection, recovered individuals behave in a similar manner as susceptible individuals and evaluate the trade-off between cost of adopting protection and the instantaneous infection risk while choosing their protection status. As before, we assume $c_{\IUt} > c_{\IPt} \ge 0$. 

We denote the time-varying proportions of susceptible, infected and recovered individuals who choose to remain unprotected by $z_S$, $z_I$, and $z_R$, respectively. In particular, we have $x_{\RUt} = z_R r$, $x_{\RPt} = (1 - z_R)r$ and so on. These proportions evolve according to the replicator dynamics with payoffs defined in \eqref{eq:payoff_siri}. We now state the coupled evolution of disease and evolutionary behavioral dynamics as follows: 
\begin{subequations}\label{eq:siri_coupled} 
\begin{align}
\dot{s} & = -(\betau z_I + \betap(1 - z_I))(z_S + \alpha(1 - z_S))s y \label{eq:sdot_siri_coupled} 
\\ \dot{y} & = (\betau z_I + \betap(1 - z_I))(z_S + \alpha(1 - z_S))s y \nonumber
\\ & \qquad +\! (\hatbetau z_I + \hatbetap(1\!-\! z_I))(z_R + \alpha(1 -\! z_R)) r y \!-\! \gamma y \label{eq:ydot_siri_coupled}
\\ \dot{r} & = -\!(\hatbetau z_I + \hatbetap(1 \!-\! z_I))(z_R + \alpha(1 - z_R)) r y\!+ \gamma y  \label{eq:rdot_siri_coupled}
\\ \dot{z}_S & = z_S(1\! -\! z_S)(\cp -\! L(1 -\! \alpha)(\betau z_I + \betap(1 - z_I))y) \label{eq:zsdot_siri_coupled}
\\ \dot{z}_I & = z_I(1\! -\! z_I)(c_{\IPt} - c_{\IUt})\label{eq:zidot_siri_coupled}
\\ \dot{z}_R & = z_R(1\! -\! z_R)(\cp \!-\! L(1 \!-\! \alpha)(\hatbetau z_I \!+\! \hatbetap(1\! -\! z_I))y)\label{eq:zrdot_siri_coupled}.
\end{align}
\end{subequations}

\begin{lemma}[Invariant Set for Coupled SIRI Dynamics]\label{lem:invariant-set-siri}
For the coupled SIRI epidemic and evolutionary behavioral dynamics defined by \eqref{eq:siri_coupled}, the set $\{(s, y, r, z_S, z_I, z_R) \rvert (s, y, r, z_S, z_I, z_R) \in [0,1]^6\}$ is invariant. 
\end{lemma}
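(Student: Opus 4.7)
The plan is to adapt the proof of Lemma~\ref{lem:invariant-set} and apply Nagumo's theorem~\cite[Theorem 4.7]{blanchini2008set} to the box $[0,1]^6$. Since the set is a Cartesian product of closed intervals, invariance reduces to a face-by-face check: on each face $\{w_i = 0\}$ one needs $\dot{w}_i \ge 0$, and on each face $\{w_i = 1\}$ one needs $\dot{w}_i \le 0$, where $w_i$ runs over the six coordinates $(s, y, r, z_S, z_I, z_R)$.

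For the three behavioral coordinates $z_S, z_I, z_R$, each of the right-hand sides \eqref{eq:zsdot_siri_coupled}--\eqref{eq:zrdot_siri_coupled} has the factored form $z(1-z)\,h(\cdot)$, so it vanishes identically on $\{z_\cdot = 0\} \cup \{z_\cdot = 1\}$. The six faces in these three coordinates are therefore weakly invariant without further work, exactly as in the SIS proof.

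For the epidemic coordinates, I would first observe that summing \eqref{eq:sdot_siri_coupled}--\eqref{eq:rdot_siri_coupled} gives $\dot{s} + \dot{y} + \dot{r} \equiv 0$, so that the total mass $s + y + r$ is conserved along every trajectory. On the lower faces, the explicit factor of $s$ in \eqref{eq:sdot_siri_coupled} and of $y$ in \eqref{eq:ydot_siri_coupled} yield $\dot{s}\big|_{s=0} = 0$ and $\dot{y}\big|_{y=0} = 0$, while \eqref{eq:rdot_siri_coupled} gives $\dot{r}\big|_{r=0} = \gamma y \ge 0$ for $y \ge 0$. On the upper face $\{s = 1\}$, equation \eqref{eq:sdot_siri_coupled} is a product of non-negative quantities with an overall minus sign, so $\dot{s} \le 0$ directly. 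For the remaining upper faces $\{y = 1\}$ and $\{r = 1\}$, I would invoke the conservation identity: on trajectories initialized on the simplex $s + y + r = 1$, the face $\{y=1\}$ forces $s = r = 0$, whence \eqref{eq:ydot_siri_coupled} reduces to $\dot{y} = -\gamma \le 0$; analogously, on $\{r = 1\}$ one has $s = y = 0$ and \eqref{eq:rdot_siri_coupled} gives $\dot{r} = 0$.

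The main obstacle is the asymmetry between the SIS and SIRI settings: here the expressions in \eqref{eq:ydot_siri_coupled}--\eqref{eq:rdot_siri_coupled} do not carry an explicit $(1-y)$ or $(1-r)$ factor, so the upper faces $\{y=1\}$ and $\{r=1\}$ cannot be handled by inspection of a single equation. One must combine the conservation $\dot{s}+\dot{y}+\dot{r}=0$ with the already-established invariance of the lower faces in the other epidemic coordinates to pin the remaining components to zero. Once this observation is made, Nagumo's theorem delivers the invariance of $[0,1]^6$ (intersected with the mass-conservation hyperplane) and completes the proof.
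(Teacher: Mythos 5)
Your proof follows the same route as the paper's: the paper's entire argument is that the result ``follows from identical arguments as the proof of Lemma~\ref{lem:invariant-set}'', i.e., a face-by-face boundary check on the box followed by an appeal to Nagumo's theorem, which is exactly what you carry out. Your version is in fact slightly more careful than the paper's one-liner: you correctly note that the upper faces $\{y=1\}$ and $\{r=1\}$ cannot be dismissed by inspecting a single equation (unlike the SIS case, where the $(1-y)$ factor is explicit) and must be handled via the conservation identity $\dot{s}+\dot{y}+\dot{r}=0$, so the invariant set should properly be read as the box intersected with the simplex $s+y+r=1$ --- a refinement the paper elides but which is needed for the claim to hold.
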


The proof follows from identical arguments as the proof of Lemma \ref{lem:invariant-set} and is omitted due to space constraints. 

A complete characterization of the equilibria and their stability properties for the above dynamics is prohibitive due to the dynamics being high dimensional and the presence of a large number of equilibrium points. In order to gain insights into epidemic evolution and convergence of infected proportion under game-theoretic decision-making, we analyze the coupled dynamics via timescale separation arguments. Specifically, we analyze in detail the case when the behavioral dynamics is much faster than the disease dynamics, derive the stable equilibria of the fast system (consisting of variables $z_S, z_I, z_R$) followed by analyzing the equilibria and convergence behavior of the reduced dynamics for the slow system (consisting of variables $s, y, r$). 

\subsection{Equilibria of the Replicator Dynamics Evolving on the Faster Timescale}

At a given infection prevalence $y$, we now characterize the equilibria of the behavioral dynamics \eqref{eq:zsdot_siri_coupled}, \eqref{eq:zidot_siri_coupled}, \eqref{eq:zrdot_siri_coupled}. 

\begin{proposition}\label{prop:siri_fast}
Let $y^*_{\mr{int}}:= \frac{c_P}{L(1-\alpha)\beta_p}$ and $\hat{y}^*_{\mr{int}}:= \frac{c_P}{L(1-\alpha)\hat{\beta}_p}$. Then, the following are true.
\begin{itemize}
\item Any equilibrium with $z_I = 1$ is unstable, 
\item For the dynamics \eqref{eq:zsdot_siri_coupled} and \eqref{eq:zidot_siri_coupled},
\begin{enumerate}
\item $(z_S = 1,z_I = 0)$ is the stable equilibrium when $y < y^*_{\mr{int}}$, 
\item $(z_S = 0,z_I = 0)$ is the stable equilibrium when $y > y^*_{\mr{int}}$,
\item any $z_S \in [0,1]$, $z_I = 0$ is an equilibrium when $y = y^*_{\mr{int}}$. 
\end{enumerate}
\item For the dynamics \eqref{eq:zidot_siri_coupled} and \eqref{eq:zrdot_siri_coupled},
\begin{enumerate}
\item $(z_I = 0,z_R = 1)$ is the stable equilibrium when $y \!< \hat{y}^*_{\mr{int}}$, 
\item $(z_I = 0,z_R = 0)$ is the stable equilibrium when $y \!> \hat{y}^*_{\mr{int}}$,
\item any $z_R \in [0,1]$, $z_I = 0$ is an equilibrium when $y = \hat{y}^*_{\mr{int}}$. 
\end{enumerate}
\end{itemize}
\end{proposition}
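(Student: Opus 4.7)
The proof proposal is to exploit the special structure of the replicator system, in which the $z_I$ equation \eqref{eq:zidot_siri_coupled} is autonomous and the $z_S$, $z_R$ equations are logistic in their own variable once $z_I$ is fixed. Since $y$ is treated as a frozen parameter on the fast timescale, the three-dimensional system decouples into a one-dimensional analysis for $z_I$ and, given $z_I$, two one-dimensional analyses for $z_S$ and $z_R$. The plan is to (i) dispose of all $z_I = 1$ equilibria by direct linearization of \eqref{eq:zidot_siri_coupled}, (ii) restrict attention to $z_I = 0$ and analyze the resulting logistic equations for $z_S$ and $z_R$, and (iii) read off stability from the sign of the Jacobian block, which turns out to be diagonal at the corner equilibria.

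For part (i), observe that \eqref{eq:zidot_siri_coupled} gives $\dot z_I = z_I(1-z_I)(c_{\IPt} - c_{\IUt})$, and by hypothesis $c_{\IPt} - c_{\IUt} < 0$. Linearizing around $z_I = 1$ via $u := 1 - z_I$ yields $\dot u = u(1-u)(c_{\IUt} - c_{\IPt})$, which has a strictly positive linearization at $u = 0$. Thus, regardless of the values of $(z_S, z_R)$ at such an equilibrium, the full Jacobian has an eigenvalue in the right half plane along the $z_I$ direction, so any equilibrium with $z_I = 1$ is unstable.

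For part (ii), set $z_I = 0$ in \eqref{eq:zsdot_siri_coupled} to obtain $\dot z_S = z_S(1-z_S)(c_P - L(1-\alpha)\beta_p y)$. The sign of the bracket is controlled by the comparison of $y$ with $y^*_{\mr{int}} = c_P/(L(1-\alpha)\beta_p)$: if $y < y^*_{\mr{int}}$ the bracket is positive and the logistic flow pushes $z_S \to 1$; if $y > y^*_{\mr{int}}$ it is negative and $z_S \to 0$; if $y = y^*_{\mr{int}}$ the bracket vanishes and every $z_S \in [0,1]$ is a fixed point. To make the stability claim rigorous, I would compute the $2\times 2$ Jacobian of $(\dot z_S, \dot z_I)$ at $(z_S^\star, 0)$ for $z_S^\star \in \{0,1\}$ and note that the block is triangular (in fact diagonal at these points) with diagonal entries $\pm(c_P - L(1-\alpha)\beta_p y)$ and $c_{\IPt} - c_{\IUt}$. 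The second entry is strictly negative, and the first has the desired sign by the threshold condition on $y$. The $z_R$ analysis via \eqref{eq:zrdot_siri_coupled} is identical word-for-word with $\beta_p, \beta_u$ replaced by $\hat\beta_p, \hat\beta_u$ and $y^*_{\mr{int}}$ by $\hat y^*_{\mr{int}}$.

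There is no serious obstacle here; the only delicate point is the non-hyperbolic case $y = y^*_{\mr{int}}$ (resp.\ $y = \hat y^*_{\mr{int}}$), where the Jacobian becomes singular and the entire segment $\{(z_S, 0) : z_S \in [0,1]\}$ consists of equilibria. For the proposition, it suffices to observe that these points are equilibria (which is immediate from $\dot z_S = \dot z_I = 0$); stability of the whole segment is not claimed, and will be handled separately in the subsequent slow-system reduction, where this continuum gives rise to a sliding mode analogous to the SIS case treated after Proposition \ref{prop:epidemic_slow}.
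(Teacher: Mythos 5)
Your proof is correct, and it is essentially the argument the authors intend: the paper omits the proof as ``straightforward,'' and your decomposition into the autonomous $z_I$ equation plus frozen-$y$ logistic equations for $z_S$ and $z_R$, with stability read off from the (block-diagonal) Jacobian exactly as in the proof of Proposition~\ref{prop:equilibria-and-stability} in Appendix~\ref{appendix:sis}, is the natural route. Your handling of the non-hyperbolic case $y = y^*_{\mr{int}}$ (resp.\ $y = \hat{y}^*_{\mr{int}}$) --- claiming only that the segment consists of equilibria --- matches what the proposition asserts.
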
 

The proof is straightforward and is omitted. In fact, the stable equilibria of the above behavioral dynamics corresponds to the Nash equilibrium strategies of the underlying population game defined by the payoff vector \eqref{eq:payoff_siri} at a given infection prevalence $y$. In particular, when $y = y^*_{\mr{int}}$, susceptible individuals are indifferent between adopting protection or remaining unprotected. When $y > y^*_{\mr{int}}$, it is optimal to adopt protection which leads to $z_S = 0$ being the stable equilibrium, and when $y < y^*_{\mr{int}}$, it is optimal to remain unprotected which leads to $z_S = 1$ being stable. Stability of $z_R$ is analogous. 

We now analyze the epidemic dynamics under timescale separation when the population instantly converges to the Nash equilibrium strategies depending on the current value of $y$. We consider two cases: $\betap > \hatbetap$ and $\betap < \hatbetap$ separately. 


\subsection{Epidemic Evolution under Strengthened Immunity upon Infection}

We first consider the case when $\betap > \hatbetap$, i.e., compared to a recovered individual, a susceptible individual becomes infected at a higher rate from a protected infected individual. In other words, initial infection strengthens the immunity of the individual against subsequent infections. In this case, we have $y^*_{\mr{int}} < \hat{y}^*_{\mr{int}}$. The reduced epidemic dynamics for infected and recovered subpopulations, where $z_S$ and $z_R$ are replaced by their stable equilibrium values in accordance with Proposition \ref{prop:siri_fast}, is a hybrid system given by
\begin{subequations}\label{eq:siri_strong_switched}
\begin{align}
& y \in [0,y^*_{\mr{int}}) : 
\begin{cases}\label{eq:siri_strong_lowy}
& \!\!\! \dot{y} = [\betap (1-r-y) + \hatbetap r - \gamma] y \\
& \!\!\! \dot{r} = [-\hatbetap r + \gamma] y,  
\end{cases} 
\\ 
& y = y^*_{\mr{int}} :  
\begin{cases}\label{eq:siri_strong_th1}
& \!\!\! \dot{y} \in \{  [(z_S + \alpha(1-z_S) \betap (1-r-y) + \hatbetap r - \gamma] y \ \rvert \ z_S \in [0,1] \} \\
& \!\!\! \dot{r} = [-\hatbetap r + \gamma] y, 
\end{cases} 
\\ 
& y \in (y^*_{\mr{int}},\hat{y}^*_{\mr{int}}) :  
\begin{cases}\label{eq:siri_strong_mediumy}
& \!\!\!\!\! \dot{y} = \![\alpha \betap (1\!-\!r-\!y) \!+\! \hatbetap r \!- \! \gamma] y \\
& \!\!\!\!\! \dot{r} = \![-\hatbetap r + \gamma] y, 
\end{cases} 
\\ 
& y = \hat{y}^*_{\mr{int}} : 
\begin{cases}\label{eq:siri_strong_th2}
& \!\!\!\! \dot{y} \in \{ [\alpha \betap (1-r-y) + \hatbetap r (z_R +\! \alpha(1-\!z_R)) -\! \gamma] y \rvert\! \ z_R \in [0,1] \} \\
& \!\!\!\! \dot{r} \in \{[-\hatbetap (z_R + \alpha(1-z_R) r + \gamma] y \ \rvert \ z_R \in [0,1] \},
\end{cases} 
\\ 
& y \in (\hat{y}^*_{\mr{int}},1] : 
\begin{cases}\label{eq:siri_strong_highy} 
& \!\!\!\!\! \dot{y} =\! [\alpha \betap (1-\!r-\!y) \!+ \alpha \hatbetap r -\! \gamma] y \\
& \!\!\!\!\! \dot{r} =\! [-\alpha \hatbetap r + \gamma] y. 
\end{cases} 
\end{align}
\end{subequations}

In particular, when $y < y^*_{\mr{int}} < \hat{y}^*_{\mr{int}}$, it follows from the previous subsection that the stable equilibrium of the fast system (behavioral replicator dynamics \eqref{eq:zsdot_siri_coupled}, \eqref{eq:zidot_siri_coupled}, \eqref{eq:zrdot_siri_coupled}) is $z_S = 1, z_I = 0, z_R = 1$. Setting these values in \eqref{eq:ydot_siri_coupled} and \eqref{eq:rdot_siri_coupled} together with the observation $s = 1-y-r$ yields the disease dynamics stated in \eqref{eq:siri_strong_lowy}. The dynamics for moderate and high prevalence of infection stated in \eqref{eq:siri_strong_mediumy} and \eqref{eq:siri_strong_highy} are obtained in an analogous manner. The dynamics at points of discontinuities are defined via differential inclusions. As before, it is easy to see that \eqref{eq:siri_strong_switched} admits a Filippov solution. 

We now characterize the equilibria $(y^*,r^*)$ of the dynamics in \eqref{eq:siri_strong_switched}. The following three types of equilibria are possible:
\begin{align*}
\mathbf{E1} & = (0,r^*), \quad  \text{for} \quad r^* \in [0,1], 
\\ \mathbf{E2} & = \left(1 - \frac{\gamma}{\hatbetap},\frac{\gamma}{\hatbetap}\right), \qquad  \mathbf{E3} = \left(1 - \frac{\gamma}{\alpha\hatbetap},\frac{\gamma}{\alpha\hatbetap}\right). 
\end{align*}
$\mathbf{E1}$ corresponds to a continuum of \textit{infection free equilibria (IFE)} where the proportion of infected population is $0$ while the proportion of recovered population depends on the initial condition and the values of the other parameters. This set of equilibria always exists for the dynamics in \eqref{eq:siri_strong_switched}. $\mathbf{E2}$ and $\mathbf{E3}$ are {\it endemic equilibrium} points with a nonzero proportion of infected population and the susceptible proportion being $0$. The existence and stability of all these equilibria as well as a sliding mode solution are established below. 

\medskip

\begin{proposition}[Equilibria and Stability of \eqref{eq:siri_strong_switched}]\label{prop:equilibria-and-stability-strong-siri}
	For the SIRI epidemic under game-theoretic protection stated in \eqref{eq:siri_strong_switched}, the following statements hold for any $y(0) \in (0,1]$.
	\begin{enumerate}
		\item If $\gamma > \betap$, $\mathbf{E2}$ and $\mathbf{E3}$ are not equilibrium points, the set of the IFE is globally asymptotically stable, and $y(t)$ decays monotonically to $0$. 
		\item If $\hatbetap< \gamma < \betap$, $\mathbf{E2}$ and $\mathbf{E3}$ are not equilibrium points, the sets of IFE with $r^* \lessgtr \frac{\betap - \gamma}{\betap - \hatbetap}$ are unstable and globally asymptotically stable, respectively.   
		\item If $\hatbetap [1-\hat{y}^*_{\mr{int}}]< \gamma < \hatbetap$, $\mathbf{E3}$ is not an equilibrium point, the set of the IFE is unstable, and $\mathbf{E2}$ is an (almost) globally asymptotically stable equilibrium point.
		\item If $\alpha \hatbetap [1-\hat{y}^*_{\mr{int}}] < \gamma < \hatbetap [1-\hat{y}^*_{\mr{int}}]$, then $\mathbf{E2}$ and $\mathbf{E3}$ are not equilibrium points, the set of the IFE is unstable and $(\hat{y}^*_{\mr{int}},1-\hat{y}^*_{\mr{int}})$ acts as a sliding mode of \eqref{eq:siri_strong_switched} with $y(t) \to \hat{y}^*_{\mr{int}}$ as $t \to \infty$.
		\item If $\gamma < \alpha \hatbetap [1-\hat{y}^*_{\mr{int}}]$, $\mathbf{E2}$ is not an equilibrium point, $\mathbf{E3}$ is an (almost) globally asymptotically stable equilibrium point, and the set of the IFE is unstable. 
	\end{enumerate}
\end{proposition}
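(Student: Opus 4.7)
The plan is to treat \eqref{eq:siri_strong_switched} as a planar piecewise-smooth Filippov system on the invariant compact set $\{(y,r)\in[0,1]^2\}$ (using $s=1-y-r\ge 0$) and for each case to (i) verify existence of the candidates $\mathbf{E1},\mathbf{E2},\mathbf{E3}$, (ii) determine local stability by linearization in the appropriate smooth region, and (iii) establish global convergence via region-wise sign analysis of $\dot y$ together with a Filippov argument at the switching surface $y=\hat y^*_{\mr{int}}$.

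Existence of $\mathbf{E1}$ is immediate since $y=0$ annihilates both $\dot y$ and $\dot r$. Substituting $s=0$ into \eqref{eq:siri_strong_lowy} or \eqref{eq:siri_strong_mediumy} yields $(y,r)=(1-\gamma/\hat\beta_p,\gamma/\hat\beta_p)=\mathbf{E2}$; feasibility $y>0$ requires $\gamma<\hat\beta_p$, and region consistency $1-\gamma/\hat\beta_p<\hat y^*_{\mr{int}}$ gives the case~3 lower bound $\gamma>\hat\beta_p(1-\hat y^*_{\mr{int}})$. An analogous substitution into \eqref{eq:siri_strong_highy} produces $\mathbf{E3}$ precisely under the case~5 condition. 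Local stability of $\mathbf{E1}$ follows from the decoupled scalar linearization at $(0,r^*)$, which lies in region~1: the eigenvalue $\beta_p-(\beta_p-\hat\beta_p)r^*-\gamma$ yields the threshold $r^*=(\beta_p-\gamma)/(\beta_p-\hat\beta_p)$ of case~2 and, using $\beta_p>\hat\beta_p$, instability of every IFE point when $\gamma<\hat\beta_p$ (cases~3--5) and stability throughout when $\gamma>\beta_p$ (case~1). At $\mathbf{E2}$ and $\mathbf{E3}$, substituting $s=1-y-r$ into the relevant region's vector field and linearizing produces an upper-triangular Jacobian whose two eigenvalues are both negative whenever the equilibrium exists, giving local asymptotic stability in cases~3 and~5.

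The main obstacle lies in case~4 and the global convergence claims. For case~4, I would evaluate the one-sided fields \eqref{eq:siri_strong_mediumy} and \eqref{eq:siri_strong_highy} at the candidate sliding point $(y,r)=(\hat y^*_{\mr{int}},\, 1-\hat y^*_{\mr{int}})$ so that $s=0$: the region-2 field gives $\dot y=[\hat\beta_p(1-\hat y^*_{\mr{int}})-\gamma]\hat y^*_{\mr{int}}>0$ and the region-3 field gives $\dot y=[\alpha\hat\beta_p(1-\hat y^*_{\mr{int}})-\gamma]\hat y^*_{\mr{int}}<0$ exactly in the case~4 parameter window, so $y=\hat y^*_{\mr{int}}$ is an attracting Filippov sliding surface. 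Solving for the unique $z_R\in[0,1]$ that annihilates the convex combination in \eqref{eq:siri_strong_th2} yields a scalar reduced dynamics in $r$ whose attractor is $r=1-\hat y^*_{\mr{int}}$. Global convergence in the remaining cases is then obtained by coarse region-wise bounds on $\dot y$ (for example in case~1, $\dot y\le(\beta_p-\gamma)y<0$ uniformly, so $y\to 0$), boundedness from Lemma~\ref{lem:invariant-set-siri}, and a Filippov-LaSalle argument inside the region that ultimately traps the trajectory. The delicate step I expect to require the most care is ruling out spurious sliding or chattering at the other switching line $y=y^*_{\mr{int}}$: one must verify that the region-1 and region-2 vector fields have matching signs in $\dot y$ across this line in each relevant case, so trajectories cross it transversally and no sliding mode forms there, which combined with the above yields all five conclusions of the proposition.
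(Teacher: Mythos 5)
Your plan for existence, for the local stability of $\mathbf{E2}$ and $\mathbf{E3}$ by direct linearization after substituting $s=1-y-r$, and for the sliding-mode verification in case~4 by comparing the signs of the one-sided fields at $(\hat{y}^*_{\mr{int}},1-\hat{y}^*_{\mr{int}})$ all check out and essentially reproduce the paper's route (the paper outsources the linearizations to the known SIRI result, Theorem~\ref{theorem:siri_vanilla}, via $R_0,R_1$, but your upper-triangular Jacobians give the same conclusions; note only that at an IFE point the Jacobian has a zero eigenvalue along the equilibrium continuum, so ``locally stable'' needs a normal-hyperbolicity remark rather than bare linearization). Your concern about spurious sliding at $y=y^*_{\mr{int}}$ is legitimate and is handled in the paper by showing the trajectory is eventually confined to a single region.

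The genuine gap is in the global convergence claims of cases~2--5. ``Coarse region-wise bounds on $\dot y$'' suffice only in case~1, where $\dot y\le(\beta_p-\gamma)y<0$ uniformly. In cases~3 and~5 the sign of $\dot y$ near $\mathbf{E2}$ or $\mathbf{E3}$ depends on $r$ (equivalently on $s$), and a LaSalle argument for a Filippov system with a continuum of boundary equilibria is not something you can invoke off the shelf. The paper's key device, which your proposal does not identify, is that once $y(t)\ge\epsilon_y>0$ one has $\dot s=-(\dot y+\dot r)\le -c\beta_p\epsilon_y s$, so $s(t)\to 0$ exponentially; $s$ is then treated as a vanishing input to the scalar $y$-dynamics restricted to the appropriate region, and convergence to $y_{E2}$ (resp.\ the level of $\mathbf{E3}$) is established by an input-to-state stability argument with the Lyapunov function $V(y)=(y-y_{E2})^2$, split into sub-cases according to whether $y_{E2}$ lies in $(0,y^*_{\mr{int}})$ or $(y^*_{\mr{int}},\hat{y}^*_{\mr{int}})$. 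The same mechanism is what confines the trajectory to a single smooth region (ruling out the chattering at $y=y^*_{\mr{int}}$ you worry about), drives $y(t)$ to $0$ in case~2 despite part of the IFE being unstable, and justifies evaluating the sliding dynamics of case~4 only at $s=0$: the sliding segment of the line $y=\hat y^*_{\mr{int}}$ is attracting only where $s$ is small, so without the $s\to 0$ step your reduced $r$-dynamics on the switching surface is not yet well posed. Supplying this vanishing-input/ISS step is what is needed to close the proof.
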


The proof is presented in Appendix \ref{appendix:siri} and exploits the notion of input-to-state stability \cite{khalil}. Note that in a certain parameter regime (Case 4 in the above proposition), the infection free equilibria are not stable, endemic equilibria do not exist and infected fraction converges to a sliding mode of the hybrid dynamics. This sliding mode corresponds to an equilibrium of the original coupled dynamics \eqref{eq:siri_coupled} given by $(s = 0, y = \hat{y}^*_{\mr{int}}, r = 1 - \hat{y}^*_{\mr{int}}, z_S = 0, z_I = 0, z_R = \frac{1}{1-\alpha} \left[ \frac{\gamma}{\hatbetap(1-\hat{y}^*_{\mr{int}})} - \alpha \right])$. It is easy to see that for the range of $\gamma$ in Case 4,  $z_R \in (0,1)$, i.e., the outcome is an intermediate level of protection adoption by recovered individuals. 


\subsection{Epidemic Evolution under Compromised Immunity}

We now consider the case when $\betap < \hatbetap$, i.e., initial infection leads to compromised immunity against future infections. Here, we have $y^*_{\mr{int}} > \hat{y}^*_{\mr{int}}$. The reduced epidemic dynamics for infected and recovered subpopulations is given by 
\begin{subequations}\label{eq:siri_weak_switched}
\begin{align}
& y \in [0,\hat{y}^*_{\mr{int}}) :
\begin{cases}
& \!\!\! \dot{y} = [\betap (1-r-y) + \hatbetap r - \gamma] y \\
& \!\!\! \dot{r} = [-\hatbetap r + \gamma] y, 
\end{cases} \label{eq:siri_weak_lowy}
\\ 
& y \in (\hat{y}^*_{\mr{int}},y^*_{\mr{int}}) \!:\! 
\begin{cases}
& \!\!\!\!\! \dot{y} = [\betap (1-\!r-\!y) +\! \alpha \hatbetap r -\! \gamma] y \\
& \!\!\!\!\! \dot{r} = [-\alpha \hatbetap r + \gamma] y, 
\end{cases} \label{eq:siri_weak_mediumy}
\\ 
& y \in (y^*_{\mr{int}},1] :\!\!
\begin{cases}\label{eq:siri_weak_highy}
& \!\!\!\! \dot{y} =\! [\alpha \betap (1-r-y) +\! \alpha \hatbetap r - \!\gamma] y \\
& \!\!\!\! \dot{r} =\! [-\!\alpha \hatbetap r +\! \gamma] y, 
\end{cases} 
\end{align}
\end{subequations}
and the dynamics at $y = y^*_{\mr{int}}$ and $y = \hat{y}^*_{\mr{int}}$ can be written in terms of differential inclusions similar to \eqref{eq:siri_strong_switched}. The above hybrid dynamics is obtained by setting $z_S$ and $z_R$ values to their stable equilibrium values in accordance with Proposition \ref{prop:siri_fast}. The equilibria of \eqref{eq:siri_weak_switched} coincide with those of \eqref{eq:siri_strong_switched}, i.e., the following equilibria exist:
\begin{align*}
\mathbf{E1} & = (0,r^*), \quad  \text{for} \quad r^* \in [0,1], 
\\ \mathbf{E2} & = \left(1 - \frac{\gamma}{\hatbetap},\frac{\gamma}{\hatbetap}\right), \qquad  \mathbf{E3} = \left(1 - \frac{\gamma}{\alpha\hatbetap},\frac{\gamma}{\alpha\hatbetap}\right). 
\end{align*}
The existence and local stability of all these equilibria as well as a sliding mode solution are established below with proof presented in Appendix \ref{app:sec:last}. 

\smallskip

\begin{proposition}[Equilibria and Stability of \eqref{eq:siri_weak_switched}]\label{prop:equilibria-and-stability-weak-siri}
For the SIRI epidemic under game-theoretic protection stated in \eqref{eq:siri_weak_switched}, the following statements hold:
\begin{enumerate}
\item if $\gamma > \hatbetap$, $\mathbf{E2}$ and $\mathbf{E3}$ are not equilibrium points, the set of the IFE is globally asymptotically stable, and $y(t)$ decays monotonically to $0$;
\item if $\hatbetap [1-\hat{y}^*_{\mr{int}}]< \gamma < \hatbetap$, $\mathbf{E2}$ is an equilibrium point which is locally asymptotically stable, and $\mathbf{E3}$ is not an equilibrium point;
\item if $\alpha \hatbetap [1-\hat{y}^*_{\mr{int}}] < \gamma < \hatbetap [1-\hat{y}^*_{\mr{int}}]$, then $\mathbf{E2}$ and $\mathbf{E3}$ are not equilibrium points, and $(\hat{y}^*_{\mr{int}},1-\hat{y}^*_{\mr{int}})$ acts as a sliding mode of \eqref{eq:siri_weak_switched};
\item if $\gamma < \alpha \hatbetap [1-\hat{y}^*_{\mr{int}}]$, $\mathbf{E2}$ is not an equilibrium point, $\mathbf{E3}$ is a locally asymptotically stable equilibrium point.  
\end{enumerate}
In addition, for statements $2-4$ above, 
\begin{itemize}
\item if $\gamma < \betap$, then all points in the IFE $\mathbf{E1}$ are unstable;
\item if $\gamma > \betap$, then the sets of the IFE  with $r^* \lessgtr \frac{\betap - \gamma}{\betap - \hatbetap}$ are locally stable and unstable, respectively. 
\end{itemize} 
\end{proposition}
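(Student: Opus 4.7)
The plan is to adapt the proof of Proposition~\ref{prop:equilibria-and-stability-strong-siri} to the compromised-immunity setting, noting that the only structural change is the reversed ordering $\hat{y}^*_{\mr{int}} < y^*_{\mr{int}}$, which swaps which threshold is crossed first as $y$ grows. First I would verify the existence of each candidate equilibrium by substituting its coordinates into the branch of \eqref{eq:siri_weak_switched} that matches the value of $y$. For $\mathbf{E2}=(1-\gamma/\hat{\beta}_p,\gamma/\hat{\beta}_p)$, substituting into \eqref{eq:siri_weak_lowy} yields $\dot{r}=0$ at $r=\gamma/\hat{\beta}_p$ and $\dot{y}=\beta_p s\,y$, so $\dot{y}=0$ forces $s=0$; the equilibrium is valid precisely when $1-\gamma/\hat{\beta}_p\in[0,\hat{y}^*_{\mr{int}})$, i.e., when $\hat{\beta}_p[1-\hat{y}^*_{\mr{int}}]<\gamma<\hat{\beta}_p$. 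An analogous substitution in \eqref{eq:siri_weak_mediumy} (or \eqref{eq:siri_weak_highy}) shows that $\mathbf{E3}$ sits in the interior of its branch iff $\gamma<\alpha\hat{\beta}_p[1-\hat{y}^*_{\mr{int}}]$, and $\mathbf{E1}$ is trivially an equilibrium of \eqref{eq:siri_weak_lowy} for every $r^*$.

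Next I would establish local stability of the endemic equilibria via linearization within the appropriate branch. At $\mathbf{E2}$ in \eqref{eq:siri_weak_lowy}, the Jacobian is upper triangular because $\partial\dot{r}/\partial y=-\hat{\beta}_p r+\gamma$ vanishes at $r=\gamma/\hat{\beta}_p$, and its diagonal entries $-\beta_p y^*$ and $-\hat{\beta}_p y^*$ are both negative, so $\mathbf{E2}$ is asymptotically stable within that branch. The same upper-triangular structure handles $\mathbf{E3}$ in \eqref{eq:siri_weak_mediumy} or \eqref{eq:siri_weak_highy}. To lift this to a statement about the hybrid system, I would invoke the input-to-state stability argument used in Proposition~\ref{prop:equilibria-and-stability-strong-siri}: the $r$-equation is linear in $r$ with $y$ entering as a positive gain, so $r$ tracks its equilibrium value whenever $y$ stays bounded away from zero, and $s=1-y-r$ then converges to $0$ accordingly. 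Monotone decay of $y$ in Case~1 follows from the inequality $\beta_p+(\hat{\beta}_p-\beta_p)r-\gamma<0$ (valid for all $r\in[0,1]$ since $\gamma>\hat{\beta}_p$), which dominates the growth rate of $y$ in every branch.

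For the sliding-mode regime (Case~3), I would check that the one-sided vector fields point into the switching surface $y=\hat{y}^*_{\mr{int}}$. Just below it, the target equilibrium of \eqref{eq:siri_weak_lowy} lies at $y=1-\gamma/\hat{\beta}_p>\hat{y}^*_{\mr{int}}$ (using $\gamma<\hat{\beta}_p[1-\hat{y}^*_{\mr{int}}]$), so $\dot{y}>0$; just above it, the target of \eqref{eq:siri_weak_mediumy} lies at $y=1-\gamma/(\alpha\hat{\beta}_p)<\hat{y}^*_{\mr{int}}$ (using $\gamma>\alpha\hat{\beta}_p[1-\hat{y}^*_{\mr{int}}]$), so $\dot{y}<0$. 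The Filippov convex combination on the surface then admits a selection with $\dot{y}=0$, giving the claimed sliding mode. Finally, for IFE stability I would linearize \eqref{eq:siri_weak_lowy} at $(0,r^*)$: the coefficient of $y$ is $\beta_p+(\hat{\beta}_p-\beta_p)r^*-\gamma$, which is negative iff $r^*<(\beta_p-\gamma)/(\beta_p-\hat{\beta}_p)$; this yields a stable IFE branch only when $\gamma>\beta_p$, and instability for every $r^*\in[0,1]$ when $\gamma<\beta_p$, matching the last bullet of the statement. The main obstacle I anticipate is making the Filippov/hybrid argument at the two discontinuity surfaces precise, together with the ISS bookkeeping needed to certify $s(t)\to0$ at the endemic equilibria; the remaining calculations are direct adaptations of the strong-immunity proof.
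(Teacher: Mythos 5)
Your proposal is correct and takes essentially the same route as the paper, which likewise defers statements 1--4 to the arguments of Proposition \ref{prop:equilibria-and-stability-strong-siri} (existence by locating each endemic infection level relative to $\hat{y}^*_{\mr{int}}$, ISS with $s$ as a vanishing input, one-sided vector-field checks for the sliding mode) and derives the IFE dichotomy from Theorem \ref{theorem:siri_vanilla} applied to the branch \eqref{eq:siri_weak_lowy} --- the same computation as your linearization at $(0,r^*)$. The only cosmetic differences are that you linearize the endemic equilibria directly instead of citing Theorem \ref{theorem:siri_vanilla}, and that when verifying the sliding condition you should evaluate the one-sided fields at the point $(\hat{y}^*_{\mr{int}},1-\hat{y}^*_{\mr{int}})$ itself, where $s=0$, rather than reasoning from the location of each branch's ``target equilibrium,'' since the sign of $\dot y$ near the switching surface also depends on $r$.
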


\begin{remark}
Note that when $\gamma > \betap$ and $\gamma$ satisfies any of the conditions in statements $2-4$ of Proposition \ref{prop:equilibria-and-stability-weak-siri}, the hybrid dynamics exhibits {\it bistability} where both the IFE and an endemic equilibrium or attractive sliding mode coexist and are stable. In contrast, when $\betap > \hatbetap$, multiple stable equilibria do not coexist. Due to the coexistence of multiple stable equilibria, characterizing the respective regions of attraction remains a direction for future research. 
\end{remark}
\section{Numerical Results}
\label{section:numerical}

We now provide further insights into the coupled epidemic-replicator dynamics via numerical simulations. 

\subsection{SIS Epidemic Setting}

\begin{figure}[t!]
	\centering
	\includegraphics[width=0.8\linewidth]{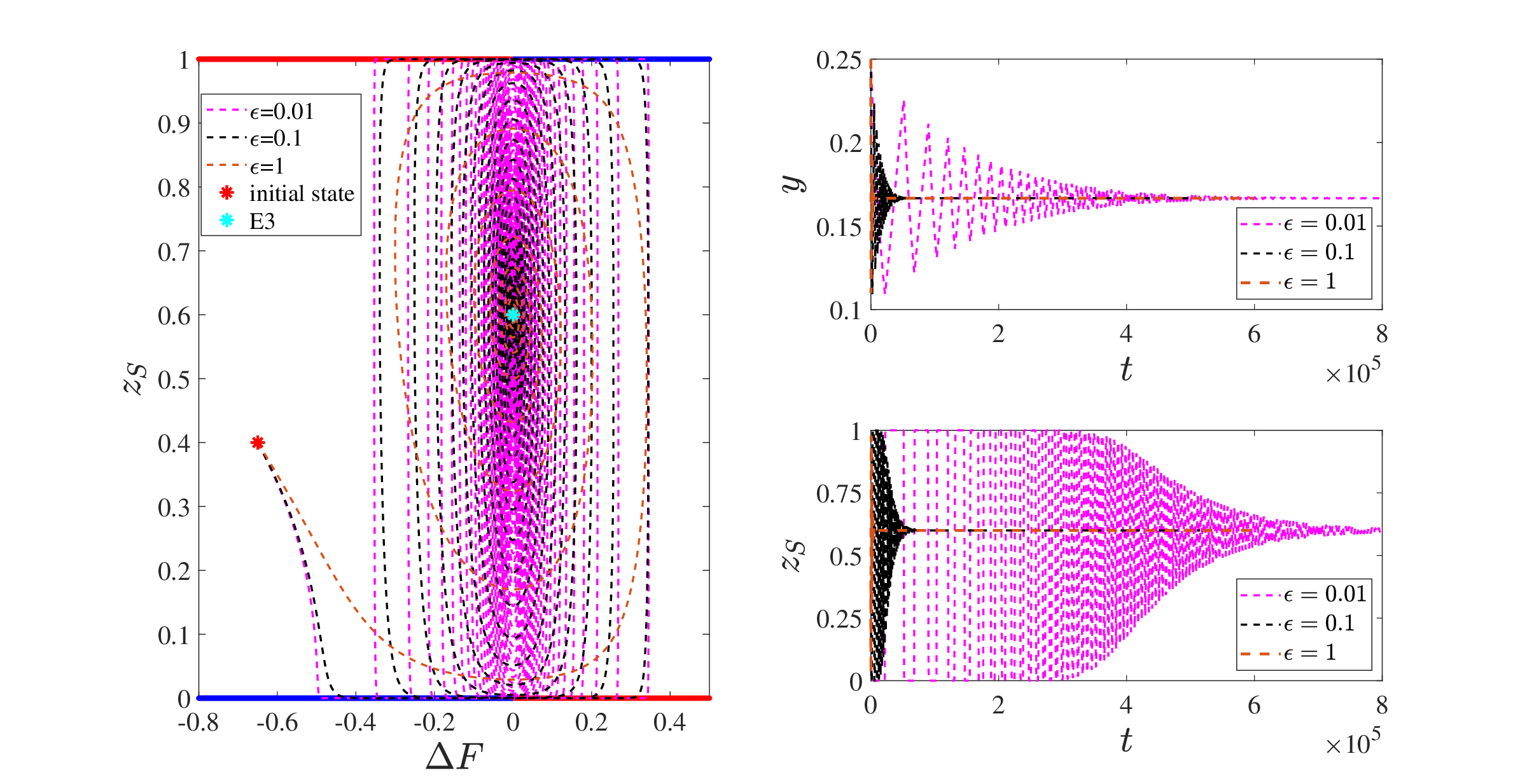}
	\caption{Trajectories of the coupled SIS epidemic-behavioral dynamics when the behavioral dynamics evolves on a faster timescale. The right panel shows the oscillatory trajectories of infected proportion ($y$) and susceptible proportion that does not adopt protection ($z_S$). The oscillations increase as $\epsilon$ is decreased. The left panel shows trajectories in $z_S-\Delta F$ plane. Trajectories reach an equilibrium point at which $\Delta F =0$.}
	\label{fig:slow_fast_ga_p1}
\end{figure}

\begin{figure*}[t]
\centering
    \includegraphics[scale=0.32]{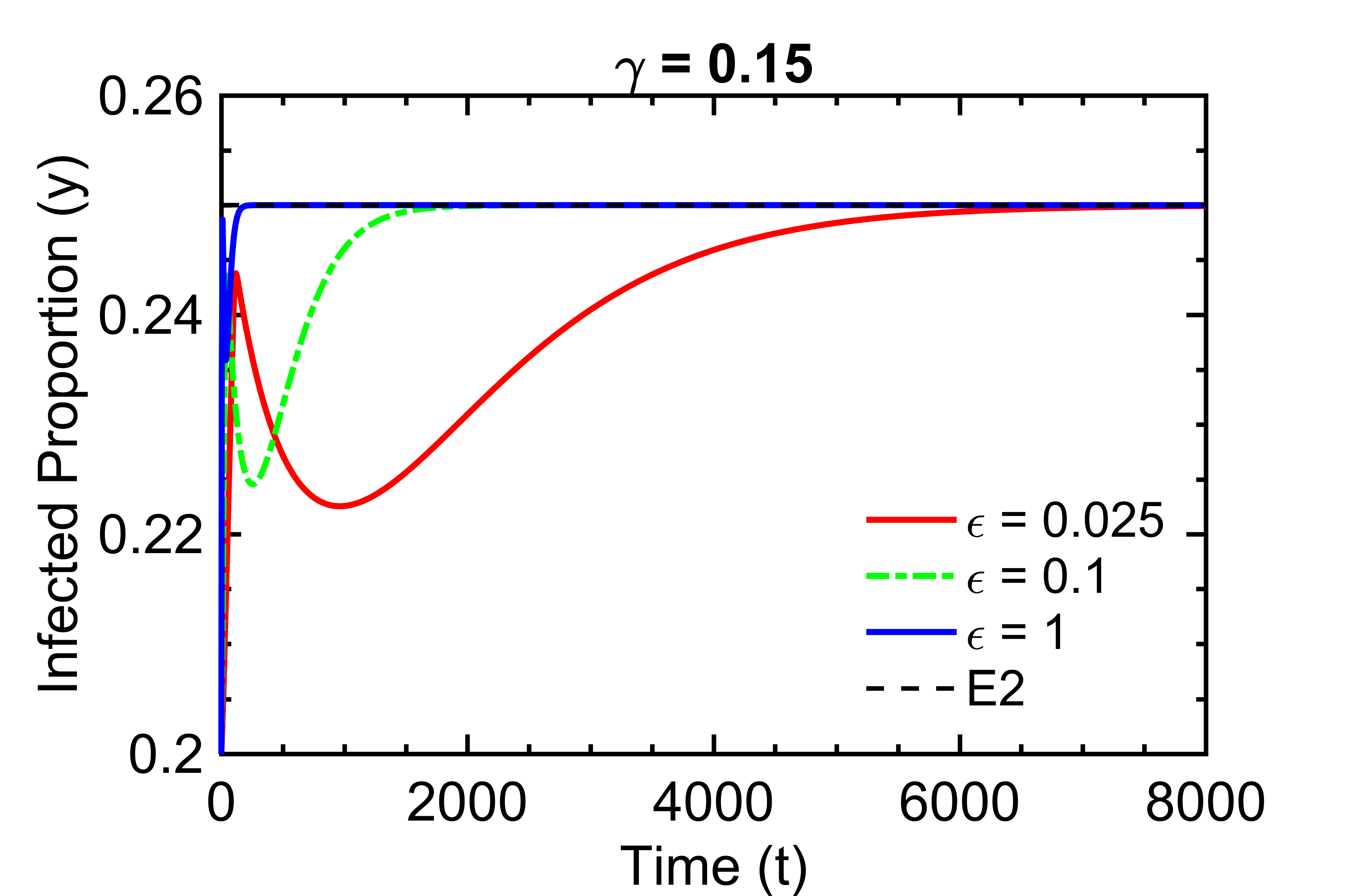}
    ~\includegraphics[scale=0.32]{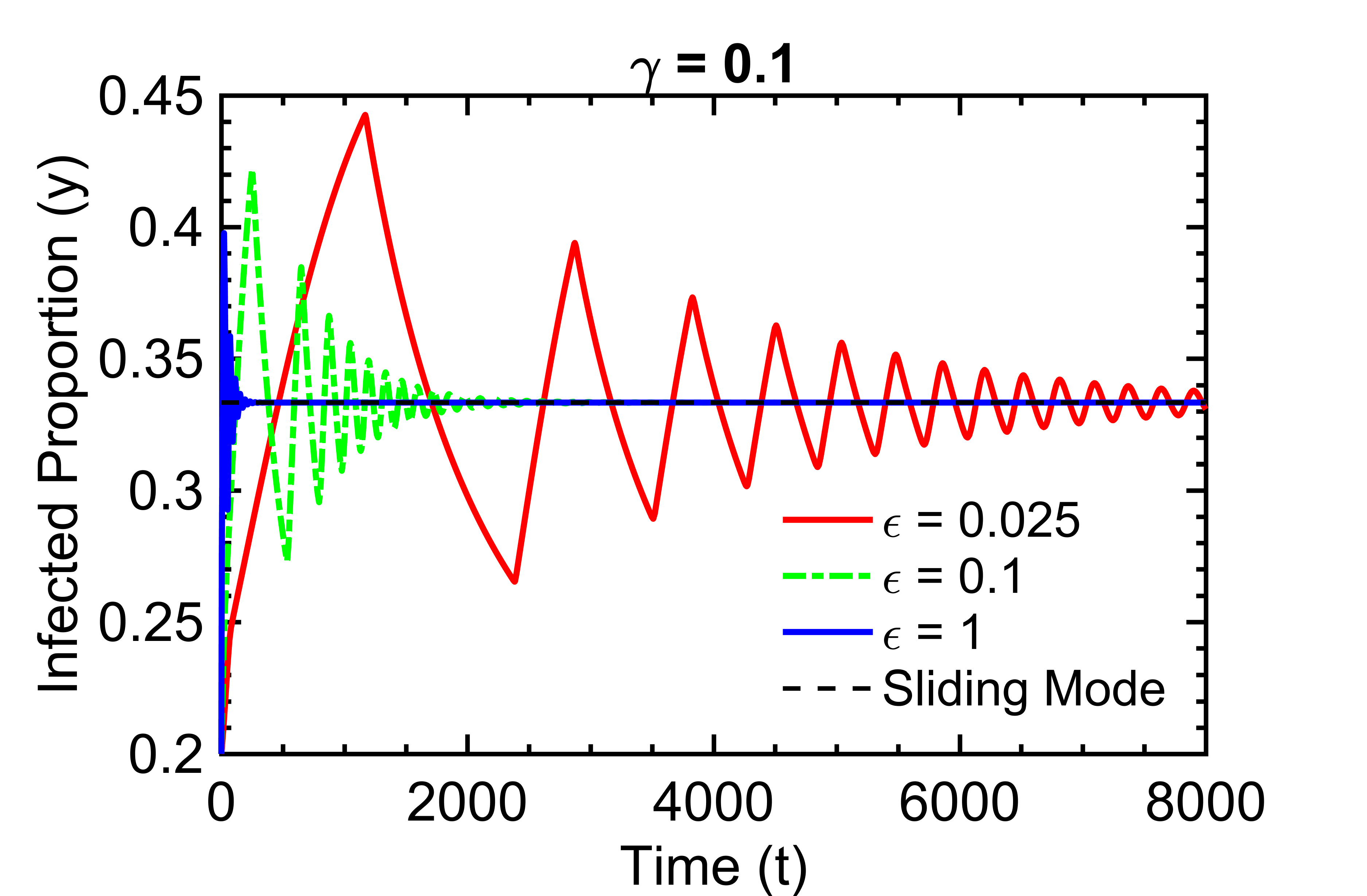}
    ~\includegraphics[scale=0.32]{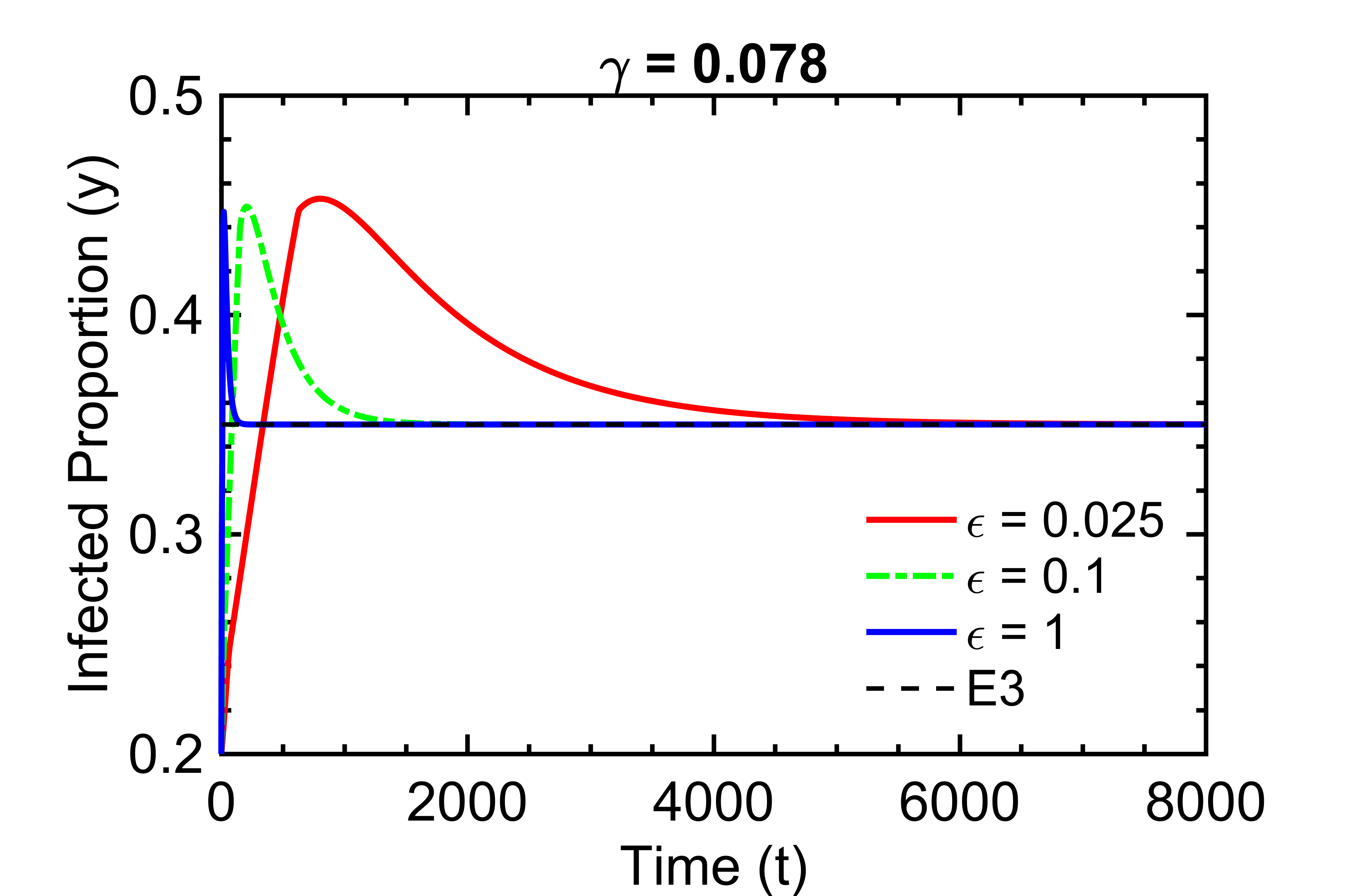}
    \caption{\small Evolution of infected proportion in the SIRI model for different values of recovery rate $\gamma$ and timescale separation parameter $\epsilon$ in accordance with Proposition \ref{prop:equilibria-and-stability-strong-siri} when initial infection leads to strengthened immunity.}
    \label{fig:SIRI_strong}
\end{figure*}

We first investigate the SIS epidemic setting. We simulate the dynamics in \eqref{eq:coupled-dynamics-timescale_slow_epi} for $\epsilon \in \{0.01, 0.1, 1\}$.  The same model parameters as in Section~\ref{sec:bif-analysis} are selected and $\gamma$ is selected as $0.1$ so that $\mathbf{E3}$ is the stable equilibrium point. Fig.~\ref{fig:slow_fast_ga_p1} (right panel) shows the time-evolution of $y$ and $z_S$. More oscillatory behavior is observed as $\epsilon$ becomes smaller, i.e., the behavioral dynamics becomes faster. To understand this, we focus on $z_S$ dynamics~\eqref{eq:main_zs} with 
	\[\Delta F =   \cp - L(1-\alpha)(\betau z_{I}(t)  + \betap (1-z_I(t))) y(t)
\]
as a dynamic parameter. To this end, we illustrate $z_S$ trajectories in $z_S-\Delta F$ plane in  Fig.~\ref{fig:slow_fast_ga_p1} (left panel). Recall that if $\Delta F$ is a positive (resp. negative) constant, then $z_S =1$ (resp. $z_S=0$) is a stable equilibrium point. Accordingly, $z_S  =0$ and $z_S=1$ are marked blue and red in Fig.~\ref{fig:slow_fast_ga_p1} (left panel), when they are stable and unstable, respectively. 

Since behavioral dynamics is fast, $y$ is quasi-stationary and $z_I$ very quickly converges to zero. In Fig.~\ref{fig:slow_fast_ga_p1} (left panel), the initial fraction of infected population is sufficiently high such that $\Delta F <0$, then the fast behavioral dynamics quickly converges to $z_S =0$ (the bottom solid blue line), i.e., every susceptible individual adopts protection. This results in a decrease in the fraction of infected population and increases $\Delta F$. As $\Delta F$ becomes positive, $z_S=0$ becomes unstable and $z_S$ quickly jumps to $z_S =1$ (the top solid blue line), and a similar process repeats which again drives $\Delta F$ to negative values. This process leads to the highly oscillatory behavior seen in Fig.~\ref{fig:slow_fast_ga_p1}. Eventually, trajectories converge such that $\Delta F =0$ and $z_S$ settles to equilibrium value $z^*_{S,\mr{int}}$. 

\subsection{SIRI Epidemic Setting}

\begin{figure*}
\centering
    \includegraphics[scale=0.32]{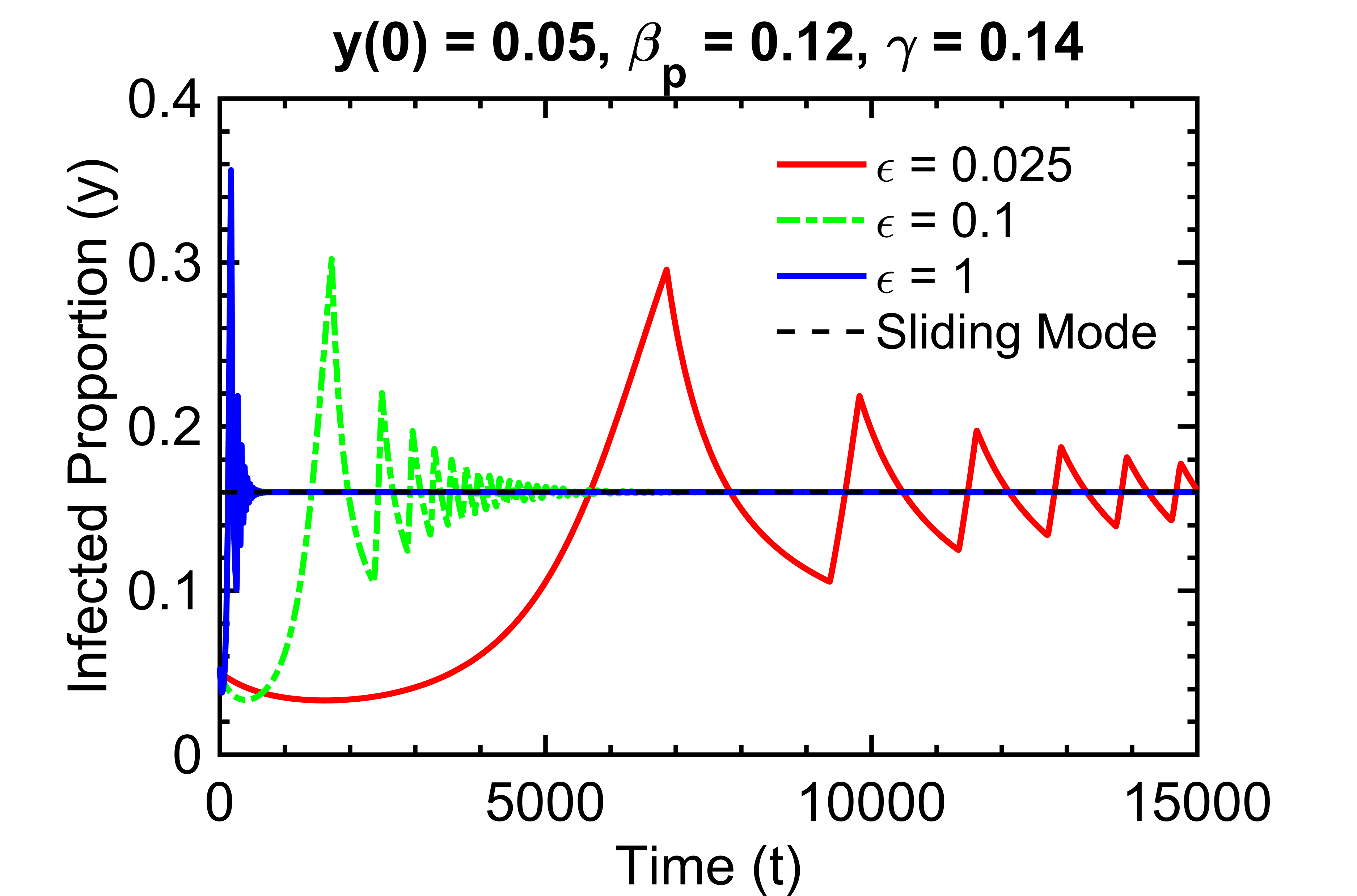}
    ~\includegraphics[scale=0.32]{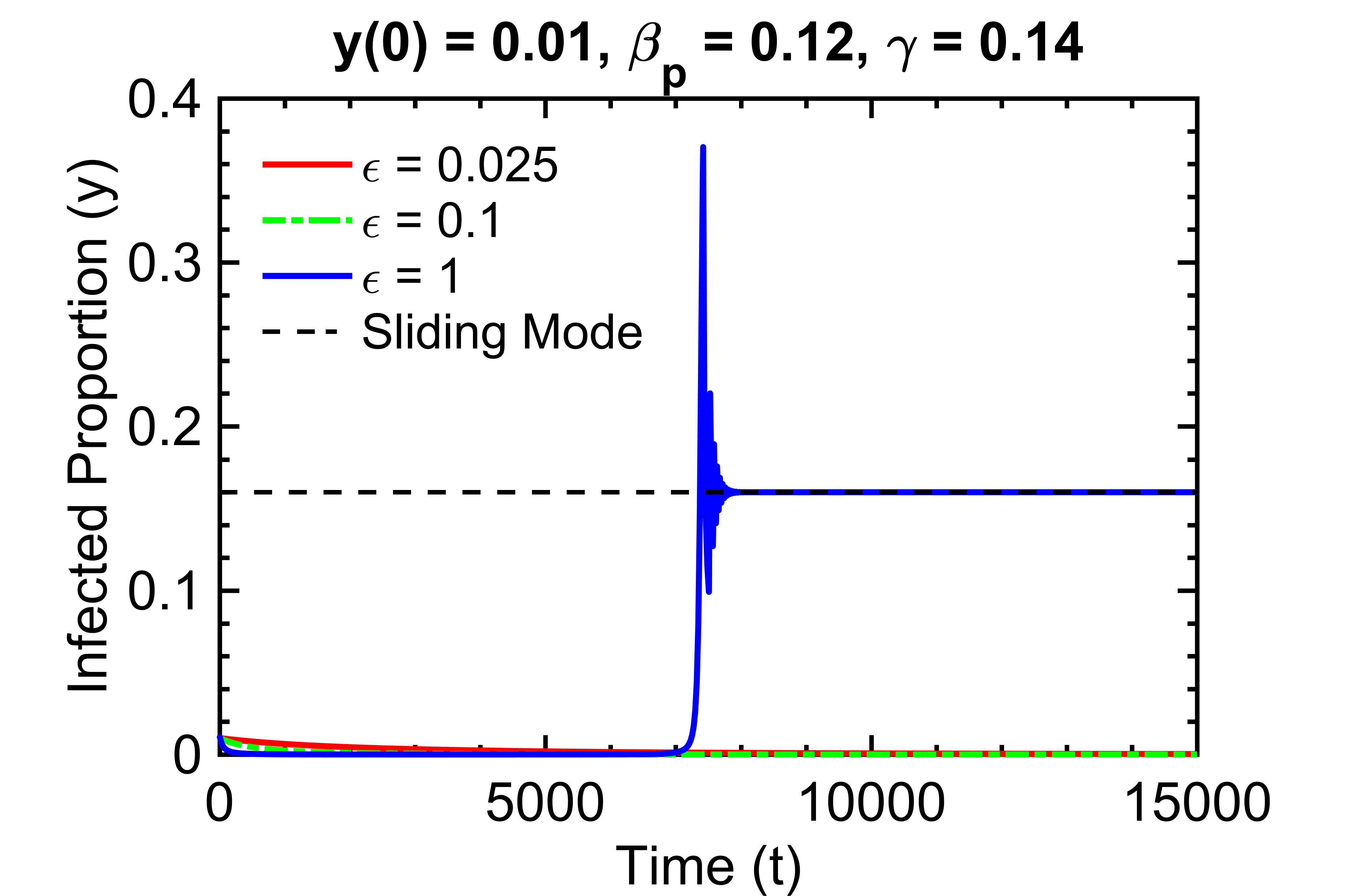}
    ~\includegraphics[scale=0.32]{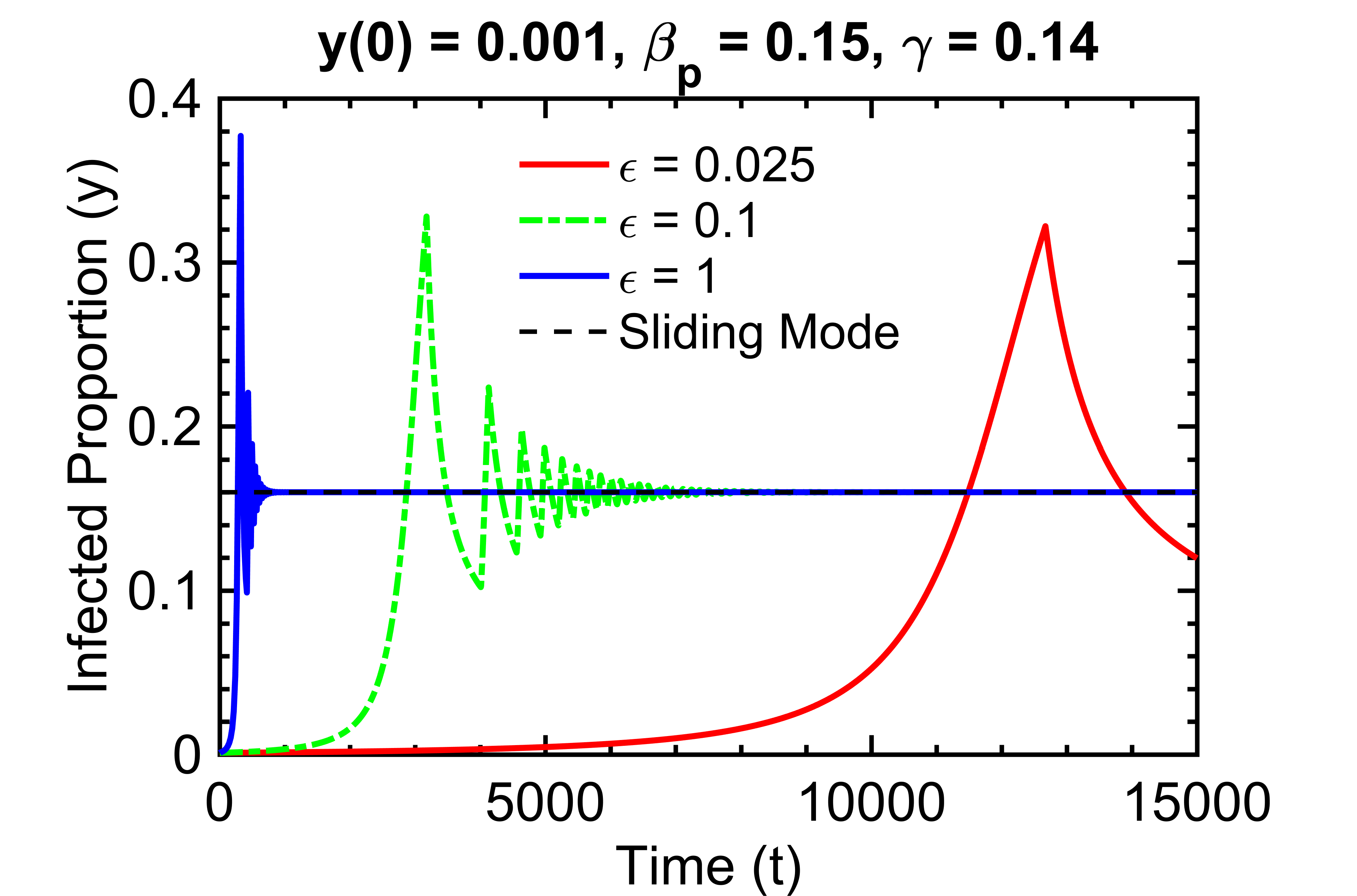}
    \caption{\small Evolution of infected proportion and bistable behavior in the SIRI model for different values of $\gamma$ and $\epsilon$ when initial infection leads to compromised immunity in accordance with Proposition \ref{prop:equilibria-and-stability-weak-siri}.}
    \label{fig:SIRI_bistable}
\end{figure*}

We first consider the case where initial infection leads to strengthened immunity. In order to highlight a wide range of transient behavior of the coupled dynamics, we choose parameter values as $\betap = 0.3, \hatbetap = 0.2, \betau = 0.4,  \hatbetau = 0.25, L = 75, \alpha = 0.6, \cp = 2, c_{\IUt} = 2, c_{\IPt} = 1$.\footnote{The value of $L$ is chosen to be much larger than the cost of protection to reflect the fact that the loss upon infection is much more significant both in terms of health risks as well as the economic loss (due to being quarantined or even hospitalized for many days) compared to the cost of wearing masks that are essentially free.} For this set of parameter values, we have $y^*_{\mr{int}} = 0.2222$ and $\hat{y}^*_{\mr{int}} = 0.3333$. Fig. \ref{fig:SIRI_strong} shows the evolution of infected proportion $y(t)$ for three different values of the recovery rate $\gamma$ with initial states as $y(0) = 0.2, r(0) = 0.01, z_S(0) = z_I(0) = z_R(0) = 0.5$. In particular, we compute the trajectories of the dynamics in \eqref{eq:siri_coupled} with $\epsilon$ multiplied to the R.H.S. of \eqref{eq:sdot_siri_coupled}, \eqref{eq:ydot_siri_coupled}, \eqref{eq:rdot_siri_coupled} via Runge-Kutta fourth order method with spacing $0.05$. When $\epsilon = 1$, epidemic and replicator dynamics evolve at the same timescale while a smaller value of $\epsilon$ signifies a faster evolution of behavior compared to disease evolution. 

When $\gamma = 0.15 \in (\hatbetap [1-\hat{y}^*_{\mr{int}}], \hatbetap)$, it follows from Proposition \ref{prop:equilibria-and-stability-strong-siri} that points at the infection free equilibria are not stable, $\mathbf{E3}$ does not exist while $\mathbf{E2}$ is a locally stable equilibrium. The plot in the left panel of Fig. \ref{fig:SIRI_strong} shows indeed $y(t)$ converges to the endemic infection level at $\mathbf{E2}$ in our simulations. Furthermore, the convergence is not monotonic. When $\gamma = 0.1$, $y(t)$ converges to the endemic infection level $\hat{y}^*_{\mr{int}}$ in accordance with Case $4$ of Proposition \ref{prop:equilibria-and-stability-strong-siri}. The convergence is in an oscillatory manner similar to the setting illustrated above for the SIS epidemic. Finally, when $\gamma = 0.078 < \alpha \hatbetap [1-\hat{y}^*_{\mr{int}}]$, $y(t)$ converges to the endemic infection level at $\mathbf{E3}$.

Finally, we consider the case where initial infection leads to compromised immunity. We choose parameter values as $\hat{\beta}_p = 0.25, \hatbetau = 0.4,  \betau = 0.35, L = 125, \alpha = 0.6, \cp = 2, c_{\IUt} = 2, c_{\IPt} = 1$. For this set of parameter values, we have $\hat{y}^*_{\mr{int}} = 0.16$. Recall that when $\betap <  \hatbetap$ and $\gamma > \betap$, infection free equilibria as well as an endemic equilibrium point can be simultaneously locally stable. Fig. \ref{fig:SIRI_bistable} shows the evolution of infected proportion $y(t)$ for three different values $\gamma, \betap$ and $y(0)$ indicated on the titles of the respective subfigures. Initial values of other states are set as stated above. When $\gamma = 0.14$ and $\betap = 0.12$, both the sliding mode solution $\hat{y}^*_{\mr{int}}$ and the IFE are locally stable. From the plot in the left panel of Fig. \ref{fig:SIRI_bistable}, we observe that when $y(0) = 0.05$, $y(t)$ converges to the sliding mode solution for all three $\epsilon$ values. However, when $y(0) = 0.01$, $y(t)$ converges to the IFE for $\epsilon = 0.1, 0.025$ indicating that certain equilibria at the IFE are also locally stable when replicator dynamics is sufficiently fast in accordance with Proposition \ref{prop:equilibria-and-stability-weak-siri}. Further, when $\epsilon = 1$, $y(t)$ converges to $\hat{y}^*_{\mr{int}}$. Thus, our results show that the relative speed of epidemic and behavioral dynamics, captured by the timescale separation parameter, may determine which equilibrium the infected proportion would eventually converge to in the bistable regime of the SIRI epidemic.

Finally, when $\betap = 0.15 > \gamma$, we no longer have bistable behavior and $y(t)$ converges to $\hat{y}^*_{\mr{int}}$ for all values of $\epsilon$ even when $y(0) = 0.001$ as shown in the right panel of Fig. \ref{fig:SIRI_bistable}. 
\section{Conclusion}\label{sec:conclusions}

We proposed and analyzed a novel model that captures the interaction of epidemic propagation dynamics with behavioral dynamics of human protection adoption. For the coupled SIS epidemic-replicator dynamics, we characterized the equilibrium points, their stability properties, and the associated bifurcations. For both SIS and SIRI epidemic models, we further analyzed the coupled dynamics under timescale separation, and established global convergence results to the equilibrium of the reduced epidemic dynamics. Numerical results showed that the relative speed of evolutionary learning compared to disease dynamics plays a critical role in the transient behavior of the coupled dynamics, may induce highly oscillatory behavior, and in the bistable regime of the SIRI epidemic, it may have a strong influence on which equilibrium the infected proportion converges to.

Thus, our results highlight that in order to influence and control epidemic prevalence, it is critical to understand not only the equilibrium behavior of humans, but also the transient evolution of human behavior in a comparable time-scale as the disease dynamics. We plan to build upon the results derived in this paper, and design dynamic intervention schemes (for example, dynamically varying the cost of protection) that guarantee convergence of the disease dynamics to desired equilibrium points. Similarly, it would be interesting to consider a setting where the the payoff of infected individuals also depends on the proportion of infected individuals, for instance due to social or peer influence. Another possible factor is bounded rationality of agents that may lead to a proportion of infected agents not adopting protection. We hope that our work stimulates further investigations along the above lines.

\appendix

\section{Proofs omitted from Section \ref{section:model}}    
\label{appendix:sis}

We first evaluate the entries of the Jacobian matrix of the coupled dynamics \eqref{eq:sis_scalar_com}, \eqref{eq:main_zs} and \eqref{eq:main_zi} required to determine the stability of the equilibrium points. Specifically, we compute
\begin{align*}
\frac{\partial f_y}{\partial y} & = \big[(1-\!y) (z_S +\! \alpha (1-\! z_S)) (\betau z_I \!+ \betap (1-\!z_I))\! -\! \gamma \big] 
\\ & \qquad - y (z_S + \alpha(1-z_S)) (\betau z_I + \betap (1-z_I)), 
\\ \frac{\partial f_y}{\partial z_S} & = y \big[(1-y) (1-\alpha) (\betau z_I + \betap (1-z_I)) \big],
\\ \frac{\partial f_y}{\partial z_I} & = y \big[(1-y) (z_S + \alpha (1-z_S)) (\betau - \betap)\big].
\end{align*}
Similarly, 
\begin{align*}
\frac{\partial f_S}{\partial y} & = - z_{S}(1-{z}_{S}) L(1-\alpha)(\betau z_{I} + \betap (1-z_I)),
\\ \frac{\partial f_S}{\partial z_S} & = (1-2z_S) \big[\cp - L(1-\alpha)(\betau z_{I} + \betap (1-z_I)) y\big]
\\ \frac{\partial f_S}{\partial z_I} & = - z_{S}(1-{z}_{S}) L(1-\alpha)(\betau - \betap) y
\end{align*}
Finally,
\begin{align*}
\frac{\partial f_I}{\partial y} & = 0, \quad  \frac{\partial f_I}{\partial z_S}  = 0, \quad \frac{\partial f_I}{\partial z_I}  = (1-2z_I) (c_{\IPt}-c_{\IUt}).
\end{align*}

\subsection{Proof of Proposition \ref{prop:equilibria-and-stability}}
\label{app:sis:1}

\begin{proof}
It can be verified that $\mathbf{E0}$ and $\mathbf{E1}$  are always equilibria of the coupled dynamics. The Jacobian matrix at $\mathbf{E0}$ and $\mathbf{E1}$ are 
\begin{equation*}
	J_{\mathbf{E0}} = 
	\left[\begin{smallmatrix}
		\alpha \betap - \gamma & 0 & 0\\
		0 & \cp & 0\\
		0 & 0 & c_{\IPt} - c_{\IUt}
	\end{smallmatrix}\right], \quad J_{\mathbf{E1}} = 
\left[\begin{smallmatrix}
\betap - \gamma & 0 & 0\\
0 & -\cp & 0\\
0 & 0 & c_{\IPt} - c_{\IUt}
\end{smallmatrix}\right]. 
\end{equation*}
 $J_{\mathbf{E0}}$ has a positive eigenvalue if $\cp > 0$. Thus, for any nonzero cost of adopting protection, $\mathbf{E0}$ is not a stable equilibrium point of the coupled dynamics.\footnote{This is intuitive: in the absence of infection, it is optimal to not choose costly protective measures.} Likewise, $\mathbf{E1}$ is stable if $\betap < \gamma$, and unstable, otherwise.

We now analyze existence and local stability of equilibrium points where infection is endemic. It can be verified that $\mathbf{E2}$ exists only when $\betap > \gamma$ as otherwise $y^*_{\mathtt{U}} < 0$. The Jacobian matrix at $\mathbf{E2}$ is
\begin{equation*}
	J_{\mathbf{E2}} = 
	\left[\begin{smallmatrix}
		d_1 & d_2 & d_3\\
		0 & -[\cp - L(1-\alpha)\betap y^*_{\mathtt{U}}] & 0\\
		0 & 0 & c_{\IPt} - c_{\IUt}
	\end{smallmatrix}\right]
	,
\end{equation*}
where $d_1 = (1-2y^*_{\mathtt{U}})\betap - \gamma$, $d_2 = y^*_{\mathtt{U}}(1-y^*_{\mathtt{U}})(1-\alpha)\betap > 0$ and $d_3 = y^*_{\mathtt{U}}(1-y^*_{\mathtt{U}})(\betau - \betap) > 0$. Thus, $J_{\mathbf{E2}}$ is an upper triangular matrix. Furthermore, the first diagonal entry is
\begin{align*}
	(1-2y^*_{\mathtt{U}})\betap - \gamma &
	= (1 - 2(1-\frac{\gamma}{\betap}))\betap - \gamma 
	\\ & = (-1 + 2\frac{\gamma}{\betap})\betap - \gamma = \gamma - \betap < 0,
\end{align*}
in the regime where $\mathbf{E2}$ exists. Therefore, $\mathbf{E2}$ is stable when 
\begin{align*}
	\cp & > L(1-\alpha)\betap y^*_{\mathtt{U}}  \iff  y^*_{\mathtt{U}}  < y^*_{\mr{int}}.
\end{align*}

It can be verified that $\mathbf{E3} = (y^*_{\mr{int}},z^*_{S,\mr{int}},0)$ is an equilibrium point. We now examine the conditions under which $y^*_{\mr{int}} \in (0,1)$ and $z^*_{S,\mr{int}} \in (0,1)$.  By definition, $y^*_{\mr{int}} > 0$. We now observe that
\begin{align*}
	z^*_{S,\mr{int}} > 0 & \iff \frac{\gamma}{\alpha \betap} > 1 - y^*_{\mr{int}} \iff  y^*_{\mr{int}} > 1 - \frac{\gamma}{\alpha \betap}, \\
	z^*_{S,\mr{int}} < 1 & \iff\frac{\gamma}{\betap} < 1 - y^*_{\mr{int}} \iff  y^*_{\mr{int}} < 1 - \frac{\gamma}{\betap}. 
\end{align*}
Thus,  $\mathbf{E3}$ exists when $ y^*_{\mathtt{P}}< y^*_{\mr{int}} <  y^*_{\mathtt{U}}$.  Note that the third row of the Jacobian of the dynamics at $\mathbf{E3}$,  $J_{\mathbf{E3}}$, would be $[0 \quad 0 \quad c_{\IPt} - c_{\IUt}]$ as before, and as a result, $c_{\IPt} - c_{\IUt} < 0$ would be an eigenvalue. Thus, we focus on the $2 \times 2$ sub-matrix containing the first two rows and columns of the Jacobian matrix which simplifies to
\begin{align*}
	\hat{J}_{\mathbf{E3}} = 
	\left[\begin{smallmatrix}
		\frac{-\gamma y^*_{\mr{int}}}{1-y^*_{\mr{int}}} & d_4\\
		- z^*_{S,\mr{int}}(1-z^*_{S,\mr{int}})L(1-\alpha)\betap & 0\\
	\end{smallmatrix}\right],
\end{align*}
where $d_4 = y^*_{\mr{int}}(1-y^*_{\mr{int}})(1-\alpha)\betap > 0$. For the above matrix, the sum of the eigenvalues is negative and the determinant is positive, and as a result, $J_{\mathbf{E3}}$ is Hurwitz. Therefore, $\mathbf{E3}$, when it exists, is a stable equilibrium of the coupled dynamics. 

It can be verified that  $\mathbf{E4}$ exists when $y^*_{\mathtt{P}} \in (0,1)$ or equivalently, when $\gamma < \alpha \betap$. The Jacobian matrix at $\mathbf{E4}$ is
\begin{equation}
	J_{\mathbf{E4}} = 
	\left[\begin{smallmatrix}
		(1-2y^*_{\mathtt{P}})\alpha\betap - \gamma & d_5 & d_6\\
		0 & d_7 & 0\\
		0 & 0 & c_{\IPt} - c_{\IUt}
	\end{smallmatrix}\right],
\end{equation}
where $d_5 =  y^*_{\mathtt{P}}(1-y^*_{\mathtt{P}}) (1-\alpha)\betap> 0$, $d_6 = y^*_{\mathtt{P}}(1-y^*_{\mathtt{P}})\alpha(\betau - \betap)> 0$ and $d_7 = \cp - L(1-\alpha)\betap y^*_{\mathtt{P}}$. Thus, $J_{\mathbf{E4}}$ is an upper triangular matrix with the first diagonal entry
\begin{align*}
	(1-2y^*_{\mathtt{P}})\alpha\betap - \gamma  
	& = \gamma - \alpha\betap < 0,
\end{align*}
in the regime where $\mathbf{E4}$ exists. Thus, $\mathbf{E4}$ is stable if 
\begin{align*}
	\cp & < L(1-\alpha)\betap y^*_{\mathtt{P}}  \iff   y^*_{\mathtt{P}}  > y^*_{\mr{int}}.
\end{align*}
This concludes the proof.
\end{proof}

\subsection{Proof of Proposition \ref{prop:epidemic_slow}}
\label{app:sis:2}

\begin{proof}
Recall for the scalar SIS epidemic dynamics 
$$ \dot{y}(t) = [(1-y(t)) \beta - \gamma] y(t), $$
the infected proportion $y(t)$ monotonically converges to $0$ if $\gamma \geq \beta$ and converges to $y^* = 1-\frac{\gamma}{\beta}$, otherwise \cite{mei2017dynamics}. Furthermore, it is easy to see that in the latter case, $\rvert y(t) - y^* \rvert$ is monotonically decreasing.\footnote{For $V(y) = (y-y^*)^2$, we have $\dot{V}(y) = -2\beta yV(y) \leq 0$.}  

Note that when $y < y^*_{\mr{int}}$ (respectively, $y > y^*_{\mr{int}}$), \eqref{eq:epi_slow} is analogous to the above dynamics with $\beta = \betap$ (respectively, $\beta=\alpha \betap$). Recall further that since $\alpha \in (0,1)$, we have $y^*_{\mathtt{P}} < y^*_{\mathtt{U}}$. We now analyze the four cases stated above. 

\noindent {\bf Case 1: $y^*_{\mathtt{U}} \leq 0$}. In this case, we have $\alpha \betap < \betap \leq \gamma$ and as a result, $y = 0$ is the only equilibrium of \eqref{eq:epi_slow} both when $y > y^*_{\mr{int}}$ and $y < y^*_{\mr{int}}$. Further, $\dot{y}(t) < 0$ for any $y(t) > 0$ including at the neighborhood of $y(t) = y^*_{\mr{int}}$. 

\noindent {\bf Case 2: $0 < y^*_{\mathtt{U}} < y^*_{\mr{int}}$}. We first show that $y(t)$ is monotonically decreasing when $y(t) > y^*_{\mr{int}}$. Note that when $y(t) > y^*_{\mr{int}}$, \eqref{eq:epi_slow} resembles the scalar SIS dynamics with $\beta = \alpha \betap$. We have two possibilities.
\begin{itemize}
\item[a.] $\gamma \geq \alpha \betap$: It is easy to see that $\dot{y}(t) < 0$ for $y(t) > y^*_{\mr{int}}$ in \eqref{eq:epi_slow}, and as a result, $y(t)$ is monotonically decreasing. 
\item[b.] $\gamma < \alpha \betap$: From the above discussion, we note that $y(t)$ under the scalar SIS dynamics with $\beta = \alpha \betap$ would converge monotonically to $1 - \frac{\gamma}{\alpha \betap} = y^*_{\mathtt{P}}$. Since $\alpha < 1$, $y^*_{\mathtt{P}} < y^*_{\mathtt{U}}$ which implies $y^*_{\mathtt{P}} < y^*_{\mr{int}}$. Therefore, $y(t)$ is monotonically decreasing for $y(t) > y^*_{\mr{int}}$. 
\end{itemize}

We now focus on the dynamics when $y(t) < y^*_{\mr{int}}$ which resembles the scalar SIS dynamics with $\beta = \betap$. Since $0 < y^*_{\mathtt{U}}$, we have $\betap > \gamma$. As a result, for any $y(0) \in (0,y^*_{\mr{int}})$, $y(t)$ converges monotonically to $y^*_{\mathtt{U}}$. Since $y^*_{\mathtt{U}} < y^*_{\mr{int}}$, $\dot{y}(t) < 0$ in the neighborhood of $y^*_{\mr{int}}$. Thus, the claim follows. 

\noindent {\bf Case 3: $y^*_{\mathtt{P}} < y^*_{\mr{int}} < y^*_{\mathtt{U}}$}. Following analogous arguments as above, we observe that $y(t)$ is monotonically decreasing when $y(t) > y^*_{\mr{int}}$ as $\max(0,y^*_{\mathtt{P}}) < y^*_{\mr{int}}$, and monotonically increasing when $y(t) < y^*_{\mr{int}}$ since the corresponding stable equilibrium $y^*_{\mathtt{U}} > y^*_{\mr{int}}$. Thus, $y^*_{\mr{int}}$ acts as a sliding surface for the dynamics \eqref{eq:epi_slow} with $y(t) = y^*_{\mr{int}}$ being a Filippov solution of \eqref{eq:epi_slow} with  $0 \in \left\{ \big[(1-y^*_{\mr{int}})\betap (z_S + \alpha(1-z_S)) - \gamma \big] y^*_{\mr{int}} \rvert z_S \in [0,1]  \right\}$.

\noindent {\bf Case 4: $y^*_{\mathtt{P}} > y^*_{\mr{int}}$}. In this case, $y(t)$ is monotonically increasing when $y(t) < y^*_{\mr{int}}$ since $y^*_{\mathtt{U}} > y^*_{\mathtt{P}} > y^*_{\mr{int}}$. When $y(t) > y^*_{\mr{int}}$, the corresponding stable equilibrium is $y^*_{\mathtt{P}} \in (y^*_{\mr{int}},1)$, and $y(t)$ converges monotonically to $y^*_{\mathtt{P}}$ for $y(t) > y^*_{\mr{int}}$.
\end{proof}


\section{Proofs omitted from Section \ref{section:SIRI}} 

\subsection{Proof of Proposition \ref{prop:equilibria-and-stability-strong-siri}}
\label{appendix:siri}

\begin{proof}
We prove each of the cases below. 

\smallskip

\noindent {\bf Case 1: $\gamma > \betap$}. Since $\alpha \in (0,1)$ together with $\gamma > \betap > \hatbetap$, we have $1 - \frac{\gamma}{\alpha\hatbetap} < 1 - \frac{\gamma}{\hatbetap} < 0$. In other words, the endemic infection level is negative at $\mathbf{E2}$ and $\mathbf{E3}$. As a result, $\mathbf{E2}$ and $\mathbf{E3}$ are not equilibrium points. The local stability of the IFE depends on the dynamics \eqref{eq:siri_strong_lowy}. Drawing analogy with the SIRI dynamics in \eqref{eq:siri_vanilla}, we have $R_0 = \frac{\betap}{\gamma} < 1$ and $R_1 = \frac{\hatbetap}{\gamma} < 1$ for the dynamics \eqref{eq:siri_strong_lowy}. Since $\alpha < 1$, we have $R_0 < 1$ and $R_1 < 1$ for the dynamics \eqref{eq:siri_strong_mediumy} and \eqref{eq:siri_strong_highy} as well. As a result, following Case 1 of Theorem \ref{theorem:siri_vanilla}, the set of IFE of \eqref{eq:siri_strong_switched} is locally stable and $y(t)$ decays monotonically to $0$. 

\noindent {\bf Case 2: $\hatbetap< \gamma < \betap$}. $\mathbf{E2}$ and $\mathbf{E3}$ not being equilibrium points follows from the arguments in Case 1 above. We have $R_0 = \frac{\betap}{\gamma} > 1$ and $R_1 = \frac{\hatbetap}{\gamma} < 1$ for the dynamics \eqref{eq:siri_strong_lowy}. Following Case 3 of Theorem \ref{theorem:siri_vanilla}, the set of the IFE with $s^* < \frac{1-R_1}{R_0 - R_1}$ or equivalently with $r^* > 1 - \frac{1-R_1}{R_0 - R_1} = \frac{\betap - \gamma}{\betap - \hatbetap}$ is locally stable, and unstable otherwise.

We now argue that any trajectory of \eqref{eq:siri_strong_switched} with $y(0) \in (0,1]$ converges to a stable IFE. Recall that any IFE point $(y=0, r = \bar{r})$ is unstable if $\bar{r} \in [0,\frac{\betap - \gamma}{\betap - \hatbetap})$~\cite[Lemma~2]{pagliara2018bistability}. 
Suppose $s(0) > \bar{s} := 1- \frac{\betap - \gamma}{\betap - \hatbetap} = \frac{\gamma - \hatbetap}{\betap - \hatbetap}$. Consequently, we have $r(0) <  \frac{\betap - \gamma}{\betap - \hatbetap}$. Since the possible IFE at the above initial condition are unstable, the vector field around $y=0$ points towards increasing value of $y$. Let $y(0) \neq 0$ and let $t_0, \epsilon_y > 0$ be suitable constants such that $y(t) \geq \epsilon_y$ for all $t \geq t_0$ until $s(t) > \bar{s}$. Thus, there exists $c > 0$ such that under \eqref{eq:siri_strong_switched}, 
$$ \dot{s} = -(\dot{y}+\dot{r}) \leq - c \betap \epsilon_y s \implies s(t) \leq s(0) e^{-c\betap\epsilon_y t} $$
until $s(t) > \bar{s}$. As a result, there exists $T_0$ such that $s(t) \leq \bar{s}$ for $t > T_0$. We exploit this property to prove convergence to a stable IFE. First we show that if $y(t) > \hat{y}^*_{\mr{int}}$, infected proportion eventually decreases. Indeed, we have
\begin{align*}
\dot{y} & = [\alpha \betap s + \alpha \hatbetap (1-s-y) - \gamma] y
\\ & = [\alpha (\betap - \hatbetap) s + \alpha \hatbetap (1-y) - \gamma] y
\\ & \leq [\alpha (\betap - \hatbetap) \bar{s} + \alpha \hatbetap (1-y) - \gamma] y
\\ & = [\alpha (\gamma - \hatbetap) + \alpha \hatbetap (1-y) - \gamma] y
\\ & = [- \alpha \hatbetap y - (1-\alpha) \gamma] y < 0
\end{align*}
since $\alpha \in (0,1)$. In other words, when $s(t) \leq \bar{s}$ for $t > T_0$, $\dot{y}(t) < 0$ for $y > \hat{y}^*_{\mr{int}}$, and as a result, the trajectory will eventually remain confined to \eqref{eq:siri_strong_lowy} and \eqref{eq:siri_strong_mediumy}.

Next we show that if $y(t) \in ({y}^*_{\mr{int}},\hat{y}^*_{\mr{int}})$, infected proportion eventually decreases as well. In this regime, we have
\begin{align*}
\dot{y} & = [\alpha \betap s + \hatbetap (1-s-y) - \gamma] y
\\ & = [(\alpha \betap - \hatbetap) s + \hatbetap (1-y) - \gamma] y
\\ & < [(\betap - \hatbetap) \bar{s} + \hatbetap(1-y) - \gamma] y
\\ & = [\gamma - \hatbetap + \hatbetap (1-y) - \gamma] = - \hatbetap y^2 < 0
\end{align*}
when $s(t) \leq \bar{s}$ for $t > T_0$. Finally, when the trajectory is eventually confined to \eqref{eq:siri_strong_lowy}, i.e., $y(t) \in (0,{y}^*_{\mr{int}})$, we have
\begin{align*}
\dot{y} & = [\betap s + \hatbetap (1-s-y) - \gamma] y
\\ & \leq [(\betap - \hatbetap) \bar{s} + \hatbetap (1-y) - \gamma] y
\\ & = [\gamma - \hatbetap + \hatbetap (1-y) - \gamma] y = - \hatbetap y^2 < 0.
\end{align*}
Therefore, when $s(t) \leq \bar{s}$, $y(t)$ is monotonically decreasing for all $y \in (0,1]$ and thus, the infected proportion asymptotically converges to an IFE.


\noindent {\bf Case 3: $\hatbetap [1-\hat{y}^*_{\mr{int}}]< \gamma < \hatbetap$}. For the dynamics \eqref{eq:siri_strong_lowy}, we have $R_0 = \frac{\betap}{\gamma} > 1$ and $R_1 = \frac{\hatbetap}{\gamma} > 1$ in this regime. Following Theorem \ref{theorem:siri_vanilla}, we conclude that all points in the IFE are unstable. Since $\mathbf{E3}$ is the endemic equilibrium for \eqref{eq:siri_strong_highy}, its existence as an endemic equilibrium for \eqref{eq:siri_strong_switched} requires 
$$ 1-\frac{\gamma}{\alpha \hatbetap} > \hat{y}^*_{\mr{int}} \iff \alpha \hatbetap[1-\hat{y}^*_{\mr{int}}] > \gamma ,$$
which is not satisfied in this parameter regime as $\alpha < 1$. 

Now observe that $\mathbf{E2}$ is the endemic equilibrium for both \eqref{eq:siri_strong_lowy} and \eqref{eq:siri_strong_mediumy}, but not for the dynamics \eqref{eq:siri_strong_highy}. Thus, for $\mathbf{E2}$ to be an equilibrium for the hybrid system \eqref{eq:siri_strong_switched}, we must have
$$ 0 < 1 - \frac{\gamma}{\hatbetap} < \hat{y}^*_{\mr{int}}, $$
or equivalently, $\hatbetap > \gamma$ and $\hatbetap [1-\hat{y}^*_{\mr{int}}]< \gamma$, i.e., precisely the parameter regime in this case. Further, for both \eqref{eq:siri_strong_lowy} and \eqref{eq:siri_strong_mediumy}, $R_1 = \frac{\hatbetap}{\gamma} > 1$, and as a result from Theorem \ref{theorem:siri_vanilla}, $\mathbf{E2}$ is locally stable. Let $y_{E2} = 1-\frac{\gamma}{\hatbetap}$.

It remains to show that the infected proportion $y(t)$ converges to the endemic level at $\mathbf{E2}$. Since all points on the IFE are unstable, the vector field at the IFE points towards increasing value of $y(t)$. Let $y(0) \neq 0$ and let $t_0, \epsilon_y > 0$ be suitable constants such that $y(t) \geq \epsilon_y$ for all $t \geq t_0$. Thus, there exists $c > 0$ such that under \eqref{eq:siri_strong_switched}, 
$$ \dot{s} = -(\dot{y}+\dot{r}) \leq - c \betap \epsilon_y s \implies s(t) \leq s(0) e^{-c\betap\epsilon_y t}. $$
As a result, for any $\epsilon_s > 0$, there exists $T_0$ such that $s(t) \leq \epsilon_s$ for $t > T_0$. In other words, $s(t)$ acts as a vanishing input to the dynamics of infected proportion. If $y(t) > \hat{y}^*_{\mr{int}}$,
\begin{align*}
\dot{y} & = [\alpha \betap s + \alpha \hatbetap (1-s-y) - \gamma] y
\\ & \leq [\alpha (\betap - \hatbetap) \epsilon_s + \alpha \hatbetap (1-y) - \hatbetap(1-y_{E2})] y
\\ & < [\alpha (\betap - \hatbetap) \epsilon_s + \hatbetap (y_{E2} - y)] y.
\end{align*}
Since the second term is strictly negative for $y > \hat{y}^*_{\mr{int}}$, there exists an $\epsilon_s$ and $T_0$ such that $\dot{y}(t) < 0$ for $t > T_0$, and as a result, the trajectory will eventually remain confined to \eqref{eq:siri_strong_lowy} and \eqref{eq:siri_strong_mediumy}. We now consider the following two sub-cases.



\noindent {\bf Case 3A: $y_{E2} \in (0,{y}^*_{\mr{int}})$.} Following analogous arguments as above, we first claim that there exists $T'_0$ such that $\dot{y}(t) < 0$ for $t > T'_0$, $y(t) \in ({y}^*_{\mr{int}},\hat{y}^*_{\mr{int}})$. In other words, the trajectory eventually remains confined to \eqref{eq:siri_strong_lowy}. Observe now that the dynamics of infected proportion in \eqref{eq:siri_strong_lowy} can be viewed as a perturbed system with $s(t) = 1 - r(t) - y(t)$ playing the role of a vanishing input. We now show that the equilibrium point $\mathbf{E2}$ is input-to-state stable (ISS) \cite{khalil}. Let $V(y) = (y - y_{E2})^2$ be the candidate ISS-Lyapunov function. Under the dynamics \eqref{eq:siri_strong_lowy}, we have
\begin{align*}
\dot{V}(y) & = 2(y-y_{E2}) [\betap s + \hatbetap r - \gamma] y 
\\ & = 2(y-y_{E2}) [\betap s + \hatbetap (1-s-y) - \hatbetap (1-y_{E2})] y
\\ & = 2(y-y_{E2}) [(\betap - \hatbetap) s - \hatbetap (y-y_{E2})] y
\\ & =  2(y-y_{E2}) (\betap - \hatbetap) ys - 2\hatbetap y V(y)
\\ & \leq - 2 (\hatbetap - \delta) y V(y),
\end{align*}
when $\rvert y(t) - y_{E2} \rvert \geq \delta^{-1} (\betap - \hatbetap) \rvert s(t) \rvert$. It follows from \cite{khalil} that $y_{E2}$ is ISS for the dynamics of infected proportion when $y(t) \in [0,{y}^*_{\mr{int}})$. As $s(t) \to 0$ as $t \to \infty$, we have $y(t) \to y_{E2}$.

\noindent {\bf Case 3B: $y_{E2} \in ({y}^*_{\mr{int}},\hat{y}^*_{\mr{int}})$.} {First observe that if $y(t) \in (0,{y}^*_{\mr{int}})$, we have
\begin{align*}
\dot{y} & = [\betap s + \hatbetap (1-s-y) - \gamma] y
\\ & = [(\betap - \hatbetap) s + \hatbetap (1-y) - \gamma] y
\\ & > [(\betap - \hatbetap) s + \hatbetap (1-{y}^*_{\mr{int}}) - \hatbetap (1-y_{E2})] y
\\ & = [(\betap - \hatbetap) s + \hatbetap (y_{E2}-{y}^*_{\mr{int}})] y > 0,
\end{align*}
since $y_{E2} > {y}^*_{\mr{int}}$ in this sub-case. Consequently, $y(t)$ remains confined to \eqref{eq:siri_strong_mediumy}. Let $V(y) = (y - y_{E2})^2$ as before. Under the dynamics \eqref{eq:siri_strong_mediumy}, we have
\begin{align*}
\dot{V}(y) & = 2(y-y_{E2}) \! [\alpha \betap s + \hatbetap (1-\! s-\! y) - \! \hatbetap (1-\! y_{E2})] y
\\ & = 2(y-y_{E2}) [(\alpha \betap - \hatbetap) s - \hatbetap (y-y_{E2})] y
\\ & =  2(y-y_{E2}) (\alpha \betap - \hatbetap) ys - 2\hatbetap y V(y)
\\ & \leq -2 (\hatbetap - \delta) y V(y)
\end{align*}
when $\rvert y(t) - y_{E2} \rvert \geq \delta^{-1} \rvert \alpha \betap - \hatbetap \rvert s(t)$ which implies that $y_{E2}$ is ISS for the dynamics of infected proportion when $y(t) \in ({y}^*_{\mr{int}},\hat{y}^*_{\mr{int}})$. Since $s(t) \to 0$ as $t \to \infty$, we have $y(t) \to y_{E2}$.}

\noindent {\bf Case 4: $\alpha \hatbetap [1-\hat{y}^*_{\mr{int}}] < \gamma < \hatbetap [1-\hat{y}^*_{\mr{int}}]$}. It follows from Case 3 above that for $\mathbf{E2}$ to be an equilibrium of the hybrid dynamics, we must have $\hatbetap [1-\hat{y}^*_{\mr{int}}]< \gamma$ which is not satisfied in this regime. Similarly, for $\mathbf{E3}$ to be an equilibrium, we must have $\alpha \hatbetap [1-\hat{y}^*_{\mr{int}}] > \gamma$ which is not satisfied in this regime. IFE being unstable follows from previous arguments as $R_0 > 1$ and $R_1 > 1$ for the dynamics \eqref{eq:siri_strong_lowy}. 


We now focus on the dynamics at the neighborhood of $(\hat{y}^*_{\mr{int}},1-\hat{y}^*_{\mr{int}})$. For $y = \hat{y}^*_{\mr{int}} + \epsilon_1, r = 1 - \hat{y}^*_{\mr{int}} - \epsilon_2$ with a sufficiently small $\epsilon_1,  \epsilon_2 > 0$, we have
\begin{align*}
\dot{r} & = [-\alpha \hatbetap (1-\hat{y}^*_{\mr{int}} - \epsilon_2) +\gamma] (\hat{y}^*_{\mr{int}} + \epsilon_1) > 0, 
\\ \dot{y} & = [\alpha \hatbetap (1-\hat{y}^*_{\mr{int}} - \epsilon_2) -\gamma] (\hat{y}^*_{\mr{int}} + \epsilon_1) < 0, 
\end{align*}
since $\gamma > \alpha \hatbetap [1-\hat{y}^*_{\mr{int}}]$. Similarly, at $y = \hat{y}^*_{\mr{int}} - \epsilon_1, r = 1 - \hat{y}^*_{\mr{int}} + \epsilon_2$ with a sufficiently small $\epsilon_1, \epsilon_2 > 0$, we have
\begin{align*}
\dot{r} & = [-\hatbetap (1-\hat{y}^*_{\mr{int}} + \epsilon_2) +\gamma] (\hat{y}^*_{\mr{int}} - \epsilon_1) < 0, 
\\ \dot{y} & = [\hatbetap (1-\hat{y}^*_{\mr{int}} + \epsilon_2) -\gamma] (\hat{y}^*_{\mr{int}} - \epsilon_1) > 0, 
\end{align*}
since $\gamma < \hatbetap [1-\hat{y}^*_{\mr{int}}]$. Since the differential inclusion in \eqref{eq:siri_strong_th2} is a convex combination of the dynamics in \eqref{eq:siri_strong_mediumy} and \eqref{eq:siri_strong_highy}, $(\hat{y}^*_{\mr{int}},1-\hat{y}^*_{\mr{int}})$ acts as a sliding mode of \eqref{eq:siri_strong_switched}. 

{Following identical arguments as {\bf Case 3B} above, it can be shown that when $y(t) \in (0,\hat{y}^*_{\mr{int}})$, $\dot{y} > 0$ and when $y(t) > \hat{y}^*_{\mr{int}}$, $\dot{y} < 0$ for some $t >  T_0$. Consequently, $y(t) \to {y}^*_{\mr{int}}$ as $t \to \infty$.}

\noindent {\bf Case 5: $\gamma < \alpha \hatbetap [1-\hat{y}^*_{\mr{int}}]$}. It follows from Case 3 that for $\mathbf{E3}$ to be an equilibrium of the hybrid dynamics, we require $\gamma < \alpha \hatbetap [1-\hat{y}^*_{\mr{int}}]$ and $\gamma < \alpha \hatbetap$, both of which are satisfied in this regime. Further, for \eqref{eq:siri_strong_highy}, $R_1 = \frac{\alpha \hatbetap}{\gamma} > 1$, and consequently, $\mathbf{E3}$ is locally stable. The IFE being unstable and $\mathbf{E2}$ not being an equilibrium point follows from previous arguments. 

Following identical arguments as {\bf Case 3B} above, it can be shown that when $y(t) < \hat{y}^*_{\mr{int}}$, $\dot{y} > 0$ and when $y(t) > \hat{y}^*_{\mr{int}}$, the endemic infection level at $\mathbf{E3}$ is ISS for the infection dynamics with $s(t)$ being the vanishing input. 

\end{proof}

\subsection{Proof of Proposition \ref{prop:equilibria-and-stability-weak-siri}}
\label{app:sec:last}

\begin{proof}
The proofs of statements $1-4$ follows from analogous arguments as the proof of Proposition \ref{prop:equilibria-and-stability-strong-siri} and is omitted. We now focus on establishing the claims regarding the stability of IFE. Note that local stability of the IFE depends on the dynamics \eqref{eq:siri_weak_lowy}. For statements $2-4$, we have $\gamma < \hatbetap$, and as a result, $R_1 = \frac{\hatbetap}{\gamma} > 1$ for \eqref{eq:siri_weak_lowy}. 

When $\gamma < \betap$, we have $R_0 = \frac{\betap}{\gamma} > 1$ for the dynamics \eqref{eq:siri_weak_lowy}. Following Case 3 of Theorem \ref{theorem:siri_vanilla}, all points in the IFE are unstable. 

When $\gamma > \betap$, we have $R_0 = \frac{\betap}{\gamma} < 1$ for the dynamics \eqref{eq:siri_weak_lowy}. It follows from Case 4 of Theorem \ref{theorem:siri_vanilla} that IFE with 
\begin{align*}
s^* > \frac{1-\frac{\hatbetap}{\gamma}}{\frac{\betap}{\gamma} - \frac{\hatbetap}{\gamma}} = \frac{\gamma - \hatbetap}{\betap - \hatbetap} \iff r^* < \frac{\betap - \gamma}{\betap - \hatbetap} 
\end{align*}
are locally stable and IFE with $r^* > \frac{\betap - \gamma}{\betap - \hatbetap}$ are unstable.
\end{proof}

\bibliographystyle{plainnat}  
\bibliography{arxiv_main}

\end{document}